\definecolor{pinegreen}{rgb}{0,0.55,0.45}
\definecolor{dkpurple}{rgb}{0.25,0.25,0.5}
\definecolor{commcolour}{rgb}{0.25,0.25,0.5}
\newcommand{\wtab}[1]{
  \begin{tabular}{c}
    #1
  \end{tabular}
}
\newcommand{\pcode}[2][\codesize]{
    \fbox{
    \begin{minipage}{0.45\linewidth}
    #1
    \begin{tabbing}
    xx \= xx \= xx \= xx \= xx \= xx \= xx \= xx \= xx \= \kill
    #2
    \end{tabbing}
    \end{minipage}
    }
  }
\newcommand{\cfg}[5][scale=0.8]{
  \begin{tikzpicture}[yscale=-1,#1]
    \tikzstyle{terminal}=[draw,circle,minimum size=5pt,inner sep=0pt]
    \tikzstyle{stmt}=[draw,rectangle,inner sep=3pt]
    \tikzstyle{edge}=[draw,thick,-stealth]

    \foreach \pos/\id in {#2}
      \node[terminal] (\id) at \pos {};

    \foreach \pos/\id/\lbl in {#3}
      \node[stmt] (\id) at \pos {\lbl};

    \foreach \start/\dest in {#4}
      \path[edge] (\start) -- (\dest);

    #5
  \end{tikzpicture}
}
\newcommand{\ignore}[1]{}
\newcommand{\longmath}[1]{}
\newcommand{\shortmath}[1]{#1}
\newcommand{\abs}[1]{\vert#1\vert} % \abs{q} is |q| in math mode
\newcommand{\ebox}{\quad{\vrule height4pt width4pt depth0pt}}
\newcommand{\Va}{\ensuremath{V_{\mathcal{A}}}}
\newcommand{\Ca}{\ensuremath{C_{\mathcal{A}}}}
\DeclareMathOperator{\meet}{\sqcap}
\DeclareMathOperator{\glb}{\bigsqcap}
\DeclareMathOperator{\join}{\sqcup}
\DeclareMathOperator{\bleq}{\sqsubseteq}
\DeclareMathOperator{\unop}{\circ}
\DeclareMathOperator{\binop}{\diamond}
\DeclareMathOperator{\logop}{\bowtie}
\newcommand{\sys}[1]{\textsc{#1}}
\newcommand{\sem}[1]{\ensuremath{[\![#1]\!]}}
\begin{document}

\title{A Partial-Order Approach \\ to Array Content Analysis}

\author{Graeme Gange\inst{1} \and 
        Jorge A. Navas\inst{2} \and
        Peter Schachte\inst{1} \and
        Harald S{\o}ndergaard\inst{1} \and
        Peter J. Stuckey\inst{1}}
\institute{Department of Computing and Information Systems, 
\\ The University of Melbourne, Victoria 3010, Australia
\\ \email{\{gkgange,schachte,harald,pstuckey\}@unimelb.edu.au
\\}
\and NASA Ames Research Center, Moffett Field, CA 94035, USA
\\ \email{jorge.a.navaslaserna@nasa.gov}}
\date{}

\maketitle

\begin{abstract}
We present a parametric abstract domain for array content analysis.
The method maintains invariants for contiguous regions of the array,
similar to the methods of Gopan, Reps and Sagiv, and of Halbwachs 
and P{\'e}ron.
However, it introduces a novel concept of an array content graph,
avoiding the need for an up-front factorial partitioning step. 
The resulting analysis can be used with arbitrary
numeric relational abstract domains; 
we evaluate the domain on a range of array manipulating program fragments.
\end{abstract}

% \category{D.2.4}{Software/Program Verification}{assertion checkers, formal methods}
% \category{D.3.1}{Formal Definitions and Theory}{Semantics}
% \category{F.3.1}{Specifying and Verifying and Reasoning about Programs}{assertions, invariants, mechanical verification, pre- and post-conditions}
% \category{F.3.2}{Semantics of Programming Languages}{program analysis}

% \terms
% Algorithms, 
% Design, 
% Languages, 
% Performance, 
% Reliability, 
% Theory, 
% Verification

% \keywords
% Array content analysis,
% abstract interpretation,
% invariant generation,
% static analysis,
% software verification

\section{Introduction}
\label{sec-intro}

Most imperative programming languages offer mutable arrays.
However, owing to the indirect relation between storage and
retrieval, arrays are not particularly amenable to static analysis.
While analysis of array \emph{bounds}
(e.g,~\cite{SuzukiI77,XiP98,Bodik_ABCD}) is well studied, only
recently has there been real progress in analyzing array content.
Early approaches involved \emph{array smashing}~\cite{Blanchet_Smashing}, 
where an entire array was treated as a single symbolic variable.  
This used the concept of \emph{weak} updates: transfer functions for
array assignment that can only weaken the previous abstract state.
Weak updates generally lead to a rapid loss of
precision when different segments of an array have different properties.

A significant improvement was Gopan, Reps and Sagiv's use of
\emph{array partitioning}~\cite{Gopan_POPL05} 
to split arrays into symbolic intervals, or segments.
Partitioning is facilitated by an initial analysis of the array
index expressions to determine the relative order of the indices
they denote.
A key idea is to distinguish segments that represent single array 
cells from those that represent multiple cells. 
This permits \emph{strong} updates on the singleton segments.
The array partitioning method selects a small set of
partition variables, maintaining disjunctive information about
properties which hold over all feasible total orderings of the
partition variables.

\begin{wrapfigure}[20]{r}{0.425\textwidth}
\centerline{
  \pcode{
    $i_1$ := $0$ ; \ldots ; $i_m$ := $0$ \\
    $x$ := $\star$ \\
    \textbf{while}($i_1 < n \wedge \ldots \wedge i_m < n$) \\
    \> $p$ := $\star$ \\
    \> \textbf{if}($p < 0$) \\
    \> \> $A[i_1]$ := $x + 1$ \\
    \> \> $i_1$ := $i_1+1$ \\
    \> \textbf{else if}($p < 1$) \\
    \> \> $A[i_2]$ := $x + 2$ \\
    \> \> $i_2$ := $i_2+1$ \\
    \> \ldots \\
    \> \textbf{else} \\
    \> \> $A[i_m]$ := $x + m$ \\
    \> \> $i_m$ := $i_m+1$
  }
}
\caption{\label{fig-initrand}
  The \texttt{init\_rand}$_m$ family of program fragments.
}
\end{wrapfigure}
Halbwachs and P{\'e}ron~\cite{Halbwachs_Peron_PLDI08} extended the
approach to support relational content domains and
a limited form of quantified invariants.
The resulting method is precise, but has its own drawbacks. 
First, it requires an initial segmentation phase, where the set of 
partition variables is identified; and as this phase is purely 
syntactic, it is possible for variables to be omitted which are 
critical to the invariant. 
Second, there are exponentially many possible total orderings of the 
partition variables; if many partition variables are identified, 
the analysis may become prohibitively expensive. 
For example, on the $\texttt{init\_rand}_m$ family of programs shown in
Figure~\ref{fig-initrand}, the number of partitions at the loop head
follows the progression $[6, 30, 222, 2190, 27006]$ as $m$ increases from 1
to 5 (this is discussed in more detail in Appendix C).
Finally, the analysis does not support arbitrary manipulation of index
variables; indices may only be incremented and decremented.

Cousot \emph{et al.}~\cite{Cousot_Logozzo_POPL11} instead maintain a
single partitioning of the array, selecting a consistent totally ordered
subset from a scalar variable analysis. This does not require a
separate segmentation phase, saving considerable overhead;
however, as it considers only a single consistent ordering (and
supports only value domains) the invariants it derives are
quite weak. Consider $\texttt{init\_rand}_2$,
with $x$ fixed to $0$---so each element in $[0, n)$ will be
assigned either $1$ or $2$. In this case, the relationship between
$i_1$ and $i_2$ is not known, so we must
select either $0 \leq i_1 < n$ or $0 \leq i_2 < n$. In either case,
the desired invariant at the loop exit is lost.

An alternative approach to expressing array properties is to lift an
abstract domain to quantified 
invariants~\cite{Gulwani_Lifting_POPL08}. 
This technique is quite general but there are two major limitations. 
First, it requires from the user the specification of \emph{templates}
to describe when quantifiers are introduced.
Second, it is expensive, owing to the computation of
\emph{under-approximations}.
For example, to join the formulas $\forall U (G
\Rightarrow e)$ and $\forall U (G' \Rightarrow e')$,
we must compute an under-approximation of
$G \sqcup G'$, since $G$ and $G'$ are in negative positions,
and this is prohibitively expensive for many domains.

Dillig \emph{et al.}~\cite{Dillig_ESOP10} replace
strong and weak updates with \emph{fluid} updates.  
Their method is a points-to and value analysis,
so not relational in our sense.
It builds a points-to graph where nodes represent
abstract locations that include arrays qualified by index variables. 
Edges represent constraints on index variables that
identify which concrete elements in the source location point to which
concrete location in the target. A fluid update removes the dichotomy
between strong and weak updates by computing first a constraint $\varphi$
representing the elements that are modified by the update. Then it adds
a new points-to edge with $\varphi$ (strong update) while adding
the negation of $\varphi$ to existing edges from the source (weak update).
As $\varphi$ is an over-approximation, its negation is an
under-approximation and thus it would be unsound to add it directly to
other edges. Instead the analysis produces \emph{bracketing
  constraints} which are pairs of over- and under-approximations so that
negation can be done in a sound manner. This analysis is very
expressive, avoiding the large number of explicit partitions fixed a
priori in~\cite{Gopan_POPL05,Halbwachs_Peron_PLDI08}. However, the
method can still be very expensive since whenever an array is accessed,
the points-to edges must be modified by adding possibly disjunctive
formulas.

We propose a new approach to array content analysis.
We extend any existing scalar domain by introducing a pseudo-variable
to refer to segments of each array, selecting index expressions as
nodes in a graph, and annotating the graph edges with the properties
that hold in the segments of the arrays between those index expressions.
These \emph{array content graphs} offer greater flexibility than other
approaches, as they allow us to reason about properties that hold over
contiguous array segments without committing to a single total
ordering on index expressions,
while still taking advantage of available partial ordering information.
The result is an array content analysis which is fully
automatic, can be used with arbitrary
domains, and does not incur the up-front factorial cost
of previous methods~\cite{Gopan_POPL05,Halbwachs_Peron_PLDI08}.
In particular, it can be used for relational analyses, and accounts
for the possibility of array elements being related to array indices.

%\section{Language and Preliminaries}
\label{sec-lang}
We base our presentation on a small control flow graph language.
\begin{quote}
  \begin{tabular}{lrcl}
    Instructions &\textsf{I} & ~$\rightarrow$~ & $v_1 = \mathit{constant}$ 
                 $|$ $v_1 = \unop v_2$   $|$ $v_1 = v_2 \binop v_3$  $|$ \textsf{A} \\
    Array assignments~ &\textsf{A} & ~$\rightarrow$~ & $v_1 = arr[v_2]$ $|$ $arr[v_1] = v_2$ \\
%%               &    & $\mathit{output}$ $v_1$ \\
    Jumps & \textsf{J} & ~$\rightarrow$~ &  $\mathit{if}$ $(v_1 \logop v_2)$
    $\mathit{label}_1$ $\mathit{label}_2$  $|$ $\mathit{br}$ $\mathit{label}$ $|$ $error$ $|$ $end$ \\
    Blocks & \textsf{B} & ~$\rightarrow$~ & $\mathit{label}$ : \textsf{I}* \textsf{J} \\
    Programs & \textsf{P} & ~$\rightarrow$~ & $\textsf{B}^+$ \\
  \end{tabular}
\end{quote}
Each basic block is a (possibly empty) sequence of
instructions, ending in a (possibly conditional) jump.
Arithmetic unary and binary operators are denoted by
$\unop$ and $\binop$ respectively, and comparison operators by
$\logop$.  We assume that there is a fixed set of arrays $\{A_1,
\ldots, A_k\}$, which have global scope (and do not overlap in memory).
The semantics is conventional and not discussed here.
Figure~\ref{fig-copy} shows an example program in diagrammatic form.

Our analysis assumes an abstract domain
${\cal L} = \langle L, \sqsubseteq, \bot, \top, \sqcup, \sqcap \rangle$ 
for analysis over the array-free fragment of the language 
(obtained by leaving out \textsf{A})
(scalar analysis). 
We use this parametric domain to construct the array content analysis.

The remainder of this paper is structured as follows.
%%After some preliminary definitions,
Section~\ref{sec-domain}
introduces the method and its underlying ideas.
Section~\ref{sec-efficiency} discusses computational details and
efficiency.
The method has been evaluated experimentally; 
Section~\ref{sec-exper} gives a report and
Section~\ref{sec-future} concludes, suggesting further work.

\section{A Graph-Based Array Content Domain}
\label{sec-domain}

We let $V$ and $\mathcal{A}$ be sets of scalar and array variables,
respectively.
% Consider the set $V$ of scalar variables, and $\mathcal{A}$ the
% set of array variables in the program. 
% The goal of the content
% analysis is to find properties which are invariant in the
% segment of an array when indices lie in the range between a 
% pair of possible values.
%
A state in the concrete domain is a pair $\langle \sigma, \rho \rangle$,
where $\sigma:V \mapsto \mathbb{Z}$ maps
scalar variables to integer values, 
and $\rho:\mathcal{A} \rightarrow \mathbb{Z} \rightarrow \mathbb{Z}$
maps array cells to values.\footnote{For simplicity we assume arrays
elements are integers.  
The extension to arbitrary types (that may include integers) 
is not difficult, as the complexity of array elements acting as
indices is present in what we consider.}

\begin{wrapfigure}[19]{2}{0.4\textwidth}
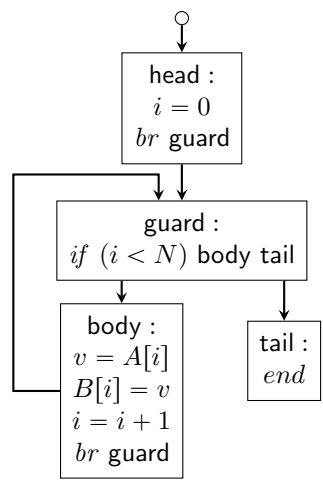

  \vspace*{-1em}
  \centerline{
    \cfg
      {{(0,-0.7)/p0}}
      {{(0, 0.8)/head/\wtab{$\mathsf{head:}$ \\
                          $i = 0$ \\
                          $\mathit{br}$ $\mathsf{guard}$}},
       {(0, 3)/guard/\wtab{$\mathsf{guard:}$ \\
                          $\mathit{if}$ $(i < N)$ $\mathsf{body}$ $\mathsf{tail}$}},
       {(-1, 5.5)/body/\wtab{$\mathsf{body:}$ \\
                          $v = A[i]$ \\
                          $B[i] = v$ \\
                          $i = i+1$ \\
                          $\mathit{br}$ $\mathsf{guard}$}},
       {(1.70, 5)/tail/\wtab{$\mathsf{tail:}$ \\
                          $end$}}}
      {{p0/head},{head/guard}}
      {
        \path[edge] (body) -| ++(-1.8,-3.6) -| (guard.120);   
        \path[edge] (guard.south-|body) -- (body);
        \path[edge] (guard.south-|tail) -- (tail);
      }
  }
  \caption{\label{fig-copy}
    Copying array $A$ to $B$}
\end{wrapfigure}
Let $\Va$ denote the set of variables,
and $\Ca$ the set of constants, which may act as segment bounds.
We use an extended set of variables 
$V' = \Va \cup \Ca \cup \{ v{^+} \mid v \in \Va \cup \Ca \}$.
The pseudo-variable $v{^+}$ represents the value $v+1$.
This allows us to talk about properties that apply to single array
cells.
% This allows us to distinguish between properties which hold at a specific
% index $A[v]$, and properties which hold strictly between indices.
The constant vertices are often needed; 
because array properties often hold over ranges bounded by a constant 
on one side, and array processing code often initializes a fixed set 
of indices before processing the rest of the array,
it is not sufficient to consider only those variables and constants 
used directly as indices.
%%Consider the program fragment: \\
%%\centerline{
%%  \pcodeN{
%%    $i$ := $0$ \\
%%    $\mathbf{while}(i < n)$ \{ \\
%%    \> $A[i]$ := $0$ \\
%%    \}
%%  }
%%}
Consider the \texttt{copy} program shown in Figure~\ref{fig-copy}.
In this case neither $0$ nor $N$ is ever used \emph{directly} as an index,
but both define boundary conditions of $i$, and are needed for the 
invariant of interest.
In practice, however, $\Va$ is often considerably smaller than $V$.

We wish to relate the value of elements in an array segment
to variables in the scalar domain.
A state in the abstract domain has the form $\langle\varphi,\Psi\rangle$,
where $\varphi$ expresses scalar properties and $\Psi$ expresses array
content properties.
For each array $A \in \mathcal{A}$,
we allocate a corresponding variable $a$,
a \emph{segment} variable, which occurs only in $\Psi$, 
never in $\varphi$.
Relations between scalar and segment variables are captured in $\Psi$.
We use $U = \{a, b, \ldots \}$ to denote the set of segment variables.
Sometimes, we may wish to relate the values in an array segment
to the corresponding index, for example, to prove that $A[i] = x+i$
across a segment.
To support this, we introduce a variable $idx$ to represent the index
of a given read. We use $U_{I} = \{idx\} \cup U$ to denote the 
augmented set of segment variables.

%%\gkg{Not sure I'd call it a \emph{limitation}, per se., since
%%  you could just use a reduced product.}
The analyses of the scalar domain and array contents are based on the
same lattice $L$. %%\footnote{This is a limitation with respect
%%  to~\cite{Cousot_Logozzo_POPL11} but it allows us to infer more
%%  precise invariants.}.
%
We represent the program state as a pair of the scalar 
properties $\varphi$ and the $\abs{V'}\times\abs{V'}$ matrix $\Psi$ of array 
properties such that $\psi_{ij}$ denotes the properties which hold for 
all indices in the interval $[i, j)$. 
In a slight abuse of notation, we use $\psi(\ell)$ to denote the formula 
$\psi$ with each symbolic array variable $a$ replaced
with the corresponding array element $A[\ell]$. 
That is:
\[
  \psi(\ell) \equiv \psi \{ a \mapsto A[\ell] \mid A \in \mathcal{A} \}
\]

\noindent
Then the reading of an edge $(i, \psi_{ij}, j)$ is:
$\forall~i \leq \ell < j ~.~ \psi_{ij}(\ell)$.
Given some numeric abstract domain $\mathcal{L}$, a set of arrays $A$ 
and scalar variables $V$, the array content domain 
$\mathcal{C}_{\mathcal{L}}(A, V')$ is a pair $\langle \varphi, \Psi \rangle$,
where $\varphi$ is a value in $\mathcal{L}$, and 
$\Psi$ is a $\abs{V'}\times\abs{V'}$ matrix of $\mathcal{L}$-values.
Assume we have a function $\mathsf{eval}_{\mathcal{A}}$ which 
constructs a new state which treats
the $\ell^{th}$ element of each array as a scalar variable:

\[
  \mathsf{eval}_{\mathcal{A}}(\sigma,\rho,\ell) = \sigma \cup \{ a \mapsto \rho(A)(\ell) \mid A \in \mathcal{A}\}
\]
The concretization function $\gamma$ is then defined on the components
in terms of the concretization function $\gamma_{\cal L}$ of the 
scalar domain ${\cal L}$:
% piecewise on the components:
  $$\gamma(\langle \varphi, \Psi \rangle) = \{ \langle \sigma, \rho \rangle \mid \sigma \in \gamma_{ \mathcal{L}}(\varphi) \} 
 \cap \gamma_A(\Psi)$$
  $$\gamma_A(\Psi) = \bigcap_{i, j \in V'} \{ \langle \sigma, \rho \rangle \mid
    \forall_{\sigma(i) \leq \ell < \sigma(j)} \mathsf{eval}_{\mathcal{A}}(\sigma, \rho, \ell) \in \gamma_{\mathcal{L}}(\psi_{ij}) \}$$
%  $$\gamma(\varphi) = \{ \langle \sigma, \rho \rangle ~|~ \sigma \in \gamma_{\mathcal{L}}(\varphi) \}$$
%  $$\gamma(\psi_{ij}) = \{ \langle \sigma, \rho \rangle ~|~
%    \forall_{\sigma(i) \leq \ell < \sigma(j)} \mathsf{eval}_{\mathcal{A}}(\sigma, \rho, \ell) \in \gamma_{\mathcal{L}}(\psi_{ij}) \}$$
%  $$\gamma(\Psi) = \bigcap_{i, j \in V'} \gamma(\psi_{ij})$$
%  $$\gamma(\langle \varphi, \Psi \rangle) = \gamma(\varphi) \cap \bigcap_{i,j} \gamma(\Psi)$$
%
%%A value in the content domain $\mathcal{C}_{\mathcal{L}}(A,V')$ is the reduced product of a set of values in the domain $L$.
%%Assuming $L$ is a lattice, the domain $\mathcal{C}_{\mathcal{L}}(A, V')$ is also a lattice, and possesses all the corresponding
%%fixed point properties.
Notice that there are no constraints on $\rho$ in the first equation;
$\rho$ can be any array variable assignment of type
$\mathcal{A} \rightarrow \mathbb{Z} \rightarrow \mathbb{Z}$.
As any value in the content domain 
$\mathcal{C}_{\mathcal{L}}(A,V')$ is the Cartesian product of a 
fixed set of elements of $\mathcal{L}$, $\mathcal{C}_{\mathcal{L}}(A,V')$
also forms a lattice, and possesses all the corresponding fixed point 
properties.

For each edge $\psi_{ij}$, we can assume the corresponding 
interval is non-empty; that is, $\sem{i < j} \in \psi_{ij}$.
Note that the edge from $i$ to $i{^+}$ has no such constraint, since $i < i+1$ is always true.
Since the interval $[i+1, i)$ is clearly empty, $\psi_{i{^+}i} = \bot$.

\begin{figure}[t]
  \centerline{
$\varphi = 0 \leq i < N$ ~~~~ $\Psi = $
\begin{minipage}[c]{0.3\textwidth}
$$
\begin{array}{l|cccc}
  & 0 & i & i^+ & N \\
\hline
0 & \bot & a = b & \top & \top \\
i & \bot & \bot & a = v & \top \\
i^+ & \bot & \bot & \bot & \top \\
N   & \bot & \bot & \bot & \bot 
\end{array}
$$
\end{minipage}
~~~
\begin{minipage}{0.3\textwidth}
\begin{tabular}{c}
    \begin{tikzpicture}[xscale=1.2]
      \tikzstyle{vertex}=[draw,circle,minimum size=5pt,inner sep=0pt]
      \foreach \pos/\id/\lbl in {{(0, 0)/0/0}, {(1, 0)/i/i}, {(1.9, 0)/ip/i{^+}}, {(2.9, 0)/N/N}}
      {
        \node[vertex] (\id) at \pos {};
        \path (\id) ++(0, -0.3) node {$\lbl$};
      }
      \foreach \src/\dest/\label in {{0/i/$a = b$},{i/ip/$a = v$},{ip/N/$\top$}}
        \path[draw, thick, -stealth] (\src) -- node[above] {\label} (\dest);
      \path[draw, thick, -stealth] (0) .. controls (0.3, -0.75) and (1.6, -0.75) .. node[below] {$\top$} (ip);
      \path[draw, thick, -stealth] (i) .. controls (1.3, 0.75) and (2.6, 0.75) .. node[above] {$\top$} (N);
      \path[draw, thick, -stealth] (0) .. controls (0.3, 1.5) and (2.6, 1.5) .. node[above] {$\top$} (N);
    \end{tikzpicture}
\\
    \begin{tikzpicture}[xscale=1.2]
      \tikzstyle{vertex}=[draw,circle,minimum size=5pt,inner sep=0pt]
      \foreach \pos/\id/\lbl in {{(0, 0)/0/0}, {(1, 0)/i/i}, {(1.9, 0)/ip/i{^+}}, {(2.9, 0)/N/N}}
      {
        \node[vertex] (\id) at \pos {};
        \path (\id) ++(0, -0.3) node {$\lbl$};
      }
      \foreach \src/\dest/\label in {{0/i/$a = b$},{i/ip/$a = v$},{ip/N/}}
        \path[draw, thick, -stealth] (\src) -- node[above] {\label} (\dest);
    \end{tikzpicture}
  \end{tabular}
\end{minipage}
}
\caption{\label{fig-copystate}
Array content graph after assignment $v = A[i]$ in Figure~\ref{fig-copy}.
Vertices and matrix entries corresponding to
$0^{+}$ and $N^{+}$ are omitted.
}
\end{figure}
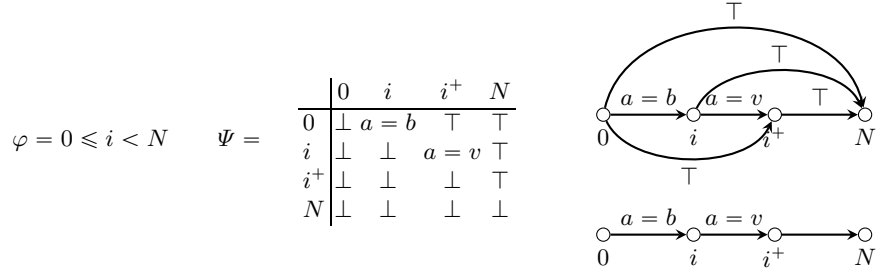

Given the state representation, we can take the join of two 
abstract states by a piecewise application of the join from ${\cal L}$:
  $$\langle \varphi^1, \Psi^1 \rangle \join \langle \varphi^2, \Psi^2 \rangle = 
    \left<
    \varphi^1 \join \varphi^2,
      \left[
      \begin{array}{ccc}
        \psi^1_{11} \sqcup \psi^2_{11} & \ldots & \psi^1_{1n} \sqcup \psi^2_{1n} \\
        \vdots & \ddots & \vdots \\
        \psi^1_{n1} \sqcup \psi^2_{n1} & \ldots & \psi^1_{nn} \sqcup \psi^2_{nn} \\
      \end{array} 
      \right]
    \right>
  $$
We can compute the meet $\meet$ analogously. 
% We also need to normalize the abstract states.
To see how the analysis works, consider again the program in 
Figure~\ref{fig-copy}.
Figure~\ref{fig-copystate} shows the abstract state
immediately after executing $v = A[i]$. 
The array content information is given by the matrix $\Psi$ of array
properties.
The \emph{array content graph} shown upper right is really
just a way of visualizing the matrix.
Note that infeasible edges, those labelled $\bot$, are omitted. 
In fact, we shall usually show only the ``transitive reduction''
of the array content graph, so that an edge $ik$ whose value is
given by $\psi_{ik} = \glb_j (\psi_{ij} \join \psi_{jk})$ is omitted. 
We depict edges representing $\top$ without a label.
This leads to the shorthand graph in Figure~\ref{fig-copystate}'s 
lower right.

\subsection{Normalization of Abstract States}
\label{sec-normal}

Given a set $C$ of constraints of the form 
$e \in [i, j) \Rightarrow \psi_{ij}$, we wish to normalize the state 
by computing the strongest consequences of $C$, still in that form.

The critical observation is this: 
  $\forall x, y~.~x < y \models \forall z.(x < z) \vee (z < y)$.
That is, any property that holds over both $[x, z)$ and $[z, y)$
must also hold over the range $[x, y)$.
To compute the strongest consequences, then, we must compute the greatest
fixed point of a rewrite system derived from the set of inequalities:
$\forall i, j, k~.~ \psi_{ij} \sqsubseteq \psi_{ik} \join \psi_{kj}$.

\begin{wrapfigure}[22]{r}{0.54\textwidth}
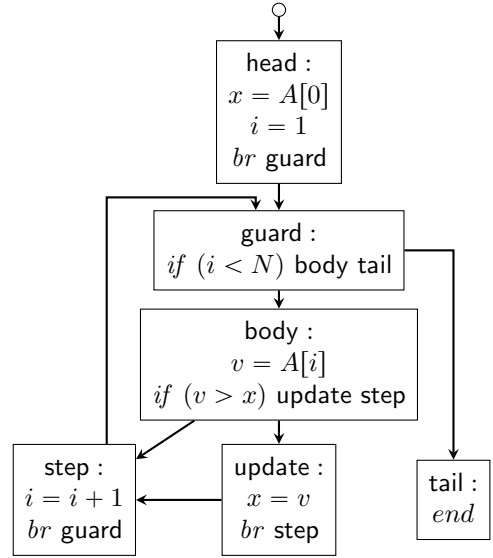

  \vspace*{-6ex}
  \centerline{
    \cfg
      {{(0,-0.9)/p0}}
      {{(0, 0.8)/head/\wtab{$\mathsf{head:}$ \\
                          $x = A[0]$ \\
                          $i = 1$ \\
                          $\mathit{br}$ $\mathsf{guard}$}},
       {(0, 3.1)/guard/\wtab{$\mathsf{guard:}$ \\
                           $\mathit{if}$ $(i < N)$ $\mathsf{body}$ $\mathsf{tail}$}},
       {(0, 5.00)/body/\wtab{$\mathsf{body:}$ \\
                          $v = A[i]$ \\
                          $\mathit{if}$ $(v > x)$ $\mathsf{update}$ $\mathsf{step}$}},
       {(0, 7.25)/update/\wtab{$\mathsf{update:}$ \\
                          $x = v$ \\
                          $\mathit{br}$ $\mathsf{step}$}},
       {(-3.4, 7.25)/step/\wtab{$\mathsf{step:}$ \\
                          $i = i+1$ \\
                          $\mathit{br}$ $\mathsf{guard}$}},
       {(2.9, 7.25)/tail/\wtab{$\mathsf{tail:}$ \\
                          $end$}}}
      {{p0/head},{head/guard},{guard/body},{body/update},{body/step},{update/step}}
      {
        \path[edge] (step.60) -- ++(0,-4.1) -| (guard.120);   
        \path[edge] (guard.east) -| (tail.north);   
      }
  }
  \caption{\label{fig-arraymax}
    Find maximum value in array $A$
  }
\end{wrapfigure}
It is tempting to try to compute the fixed point using the obvious 
rewrite system:
  $$\forall i, j, k~.~ \psi_{ij} = \psi_{ij} \meet (\psi_{ik} \join \psi_{kj})$$
But, as we shall see, this formulation does
not guarantee termination for all useful domains.

Mohri~\cite{mohri-semiring} describes the \emph{algebraic shortest path} 
problem, where the operations $(\meet, \join)$ form a semiring in 
the domain of edge weights.
This is very close to what we need, as every distributive lattice forms 
a bounded semiring. 
But ``numeric'' domains used in static analysis generally fail to be 
distributive, so we cannot use Mohri's framework directly.

\begin{figure}
  \begin{center}
  \begin{tabular}{ccc}
    (1) & 
    $0 \leq i \leq N$ & 
    \begin{tikzpicture}[xscale=2.5, baseline=(i.base)]
      \tikzstyle{vertex}=[draw,circle,minimum size=5pt,inner sep=0pt]
      \foreach \pos/\id/\lbl in {{(0, 0)/0/0}, {(0.8, 0)/i/i}, {(1.6, 0)/ip/i{^+}}, {(2.4, 0)/N/N}}
      {
        \node[vertex] (\id) at \pos {};
        \path (\id) ++(0, -0.3) node {$\lbl$};
      }
      \foreach \src/\dest/\label in {{0/i/$a \leq x$},{i/ip/$a = v$},{ip/N/}}
        \path[draw, thick, -stealth] (\src) -- node[above] {\label} (\dest);
    \end{tikzpicture}
    \\
    (2a) &
    \wtab{
      $0 \leq i \leq N$ \\ 
      $v \leq x$
    } &
    \begin{tikzpicture}[xscale=2.5, baseline=(i.base)]
      \tikzstyle{vertex}=[draw,circle,minimum size=5pt,inner sep=0pt]
      \foreach \pos/\id/\lbl in {{(0, 0)/0/0}, {(0.8, 0)/i/i}, {(1.6, 0)/ip/i{^+}}, {(2.4, 0)/N/N}}
      {
        \node[vertex] (\id) at \pos {};
        \path (\id) ++(0, -0.3) node {$\lbl$};
      }
      \foreach \src/\dest/\label in {{0/i/$a \leq x$},{i/ip/\wtab{$a = v$, \color{pinegreen} $\underline{a \leq x}$}},{ip/N/}}
        \path[draw, thick, -stealth] (\src) -- node[above] {\label} (\dest);
    \end{tikzpicture}
    \\
    (2b) &
    \wtab{
      $0 \leq i \leq N$ \\
      $x < v$
    } &
    \begin{tikzpicture}[xscale=2.5, baseline=(i.base)]
      \tikzstyle{vertex}=[draw,circle,minimum size=5pt,inner sep=0pt]
      \foreach \pos/\id/\lbl in {{(0, 0)/0/0}, {(0.8, 0)/i/i}, {(1.6, 0)/ip/i{^+}}, {(2.4, 0)/N/N}}
      {
        \node[vertex] (\id) at \pos {};
        \path (\id) ++(0, -0.3) node {$\lbl$};
      }
      \foreach \src/\dest/\label in {{0/i/\wtab{$a \leq x$, $\color{pinegreen} \underline{a < v}$}},{i/ip/$a = v$},{ip/N/}}
        \path[draw, thick, -stealth] (\src) -- node[above] {\label} (\dest);
    \end{tikzpicture}
    \\
    (2b$'$) &
    \wtab{
      $0 \leq i \leq N$ \\
      $x = v$
    } &
    \begin{tikzpicture}[xscale=2.5, baseline=(i.base)]
      \tikzstyle{vertex}=[draw,circle,minimum size=5pt,inner sep=0pt]
      \foreach \pos/\id/\lbl in {{(0, 0)/0/0}, {(0.8, 0)/i/i}, {(1.6, 0)/ip/i{^+}}, {(2.4, 0)/N/N}}
      {
        \node[vertex] (\id) at \pos {};
        \path (\id) ++(0, -0.3) node {$\lbl$};
      }
      \foreach \src/\dest/\label in {{0/i/\wtab{$\color{pinegreen} \underline{a < v}$, {\color{pinegreen} $\underline{a < x}$}}},
                                     {i/ip/\wtab{$a = v$, {\color{pinegreen} $\underline{a = x}$}}},
                                     {ip/N/}}
        \path[draw, thick, -stealth] (\src) -- node[above] {\label} (\dest);
    \end{tikzpicture}

    \\
    (3) &
    \wtab{
      $0 \leq i \leq N$
    } & 
    \begin{tikzpicture}[xscale=2.5, baseline=(i.base)]
      \tikzstyle{vertex}=[draw,circle,minimum size=5pt,inner sep=0pt]
      \foreach \pos/\id/\lbl in {{(0, 0)/0/0}, {(0.8, 0)/i/i}, {(1.6, 0)/ip/i{^+}}, {(2.4, 0)/N/N}}
      {
        \node[vertex] (\id) at \pos {};
        \path (\id) ++(0, -0.3) node {$\lbl$};
      }
      \foreach \src/\dest/\label in {{0/i/\wtab{\color{pinegreen} $\underline{a \leq x}$}},
                                     {i/ip/\wtab{$a = v$, \color{pinegreen} $\underline{a \leq x}$}},
                                     {ip/N/}}
        \path[draw, thick, -stealth] (\src) -- node[above] {\label} (\dest);
    \end{tikzpicture}
    \\
    \hline
    (2b$^\dagger$) &
    \wtab{
      $0 \leq i \leq N$ \\
      $\color{pinegreen} \underline{x = v}$
    } &
    \begin{tikzpicture}[xscale=2.5, baseline=(i.base)]
      \tikzstyle{vertex}=[draw,circle,minimum size=5pt,inner sep=0pt]
      \foreach \pos/\id/\lbl in {{(0, 0)/0/0}, {(0.8, 0)/i/i}, {(1.6, 0)/ip/i{^+}}, {(2.4, 0)/N/N}}
      {
        \node[vertex] (\id) at \pos {};
        \path (\id) ++(0, -0.3) node {$\lbl$};
      }
      \foreach \src/\dest/\label in {{0/i/\wtab{$\color{pinegreen} \underline{a < x}$, $a < v$}},{i/ip/\wtab{$a = v$, $a = x$}},{ip/N/}}
        \path[draw, thick, -stealth] (\src) -- node[above] {\label} (\dest);
    \end{tikzpicture}

  \end{tabular}
  \end{center}
  \caption{
    \label{fig-maxstate}
    Array maximum example %, illustrating the need to include scalar 
%    properties during normalization of edge properties. 
  }
\end{figure}
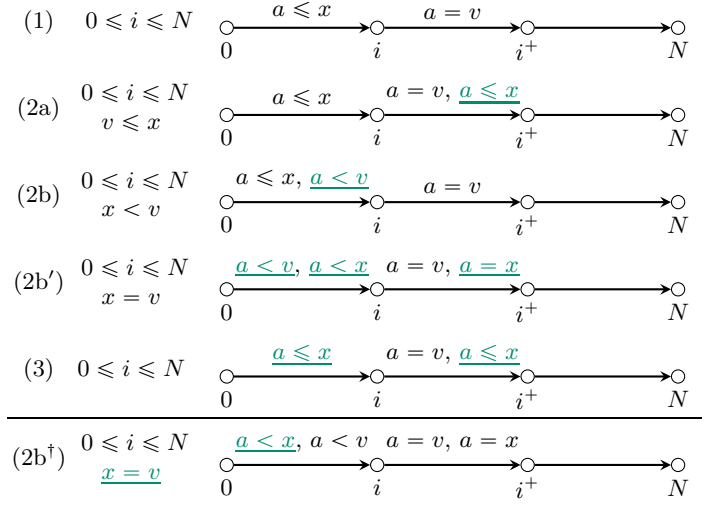

\begin{example}
\label{ex-channel}
Consider the array-maximum program from Figure~\ref{fig-arraymax}.
Figure~\ref{fig-maxstate}(1) shows the program state just 
after $v = A[i]$ is executed
(scalar constraints on the left and array constraints on the right).
On the branch with $v \leq x$, we simply add the constraint to the
scalar domain, resulting in (2a). 
If $v > x$, we add the constraint to the scalar domain ($2b$), 
then update $x$ with $v$, resulting in ($2b'$).
Observe that, in both cases, we can only discover the relationship 
between $a$ and $x$ indirectly via $v$. 
If we do not push scalar relations into the edge properties, the
underlined invariants are lost.
The final result is shown at (3).

If the statement $x$ = $v$ in $\mathsf{update}$ was instead replaced by
$x$ = $A[i]$, we would first have to lift the invariant $x = v$ from
the singleton property $\psi_{ii{^+}}$ to the scalar domain. 
We then push this property
into the segment $\psi_{0i}$, allowing us to derive $a < x$.
The state in Figure~\ref{fig-maxstate}($2b^\dagger$) shows the program 
state if $x = v$ were to be replaced with $x = A[i]$.
\hfill \ebox
\end{example}
This illustrates that it is not sufficient to simply compute the transitive
closure of $\Psi$; we must also lift properties from $\Psi$
out to the scalar domain.
For a fully reduced state $\langle \varphi, \Psi \rangle$, the 
following properties must be satisfied for all $i, j, k$:
\begin{enumerate}
  \item \label{rule_trans}
    The graph of segments must be internally consistent%
\longmath{
    $$\psi_{ij} \bleq \psi_{ik} \join \psi_{kj}$$
}%
\shortmath{:
    $\psi_{ij} \bleq \psi_{ik} \join \psi_{kj}$}
  \item \label{rule_forward}
    Segment properties are consistent with the scalar domain%
\longmath{
    $$\psi_{ij} \bleq \varphi \meet \sem{i < j}$$}%
\shortmath{: $\psi_{ij} \bleq \varphi \meet \sem{i < j}$}
  \item \label{rule_backward}
    For each non-empty segment, the scalar domain must be consistent
    with the scalar properties of that segment%
\longmath{
    $$\varphi \models i < j \Rightarrow \varphi \bleq \exists U_{I}~.~\psi_{ij}$$
}%
\shortmath{:
$\varphi \models i < j \Rightarrow \varphi \bleq \exists U_{I}~.~\psi_{ij}$}
\end{enumerate}
Notice that we only propagate constraints to the scalar component from 
segments that are known to be non-empty. 
If we tried to propagate information from all segments,
we would incorrectly derive $\bot$ as soon as any segment was 
determined to be empty.
One solution is to simply apply the three rules until a fixed point 
is reached.
This is guaranteed to compute the fully reduced state.
However, 
while this direct construction is conceptually clean, 
it suffers from some pragmatic issues relating to 
both termination and efficiency, as we shall see.

\subsection{Termination}
\label{sec:termination}
The normalization process is not guaranteed to 
terminate for arbitrary lattices.

\begin{example}
\label{ex-diverge}
Assume the analysis uses convex polyhedra.
Consider the state in Figure~\ref{fig:converge}(a).
Any fixed point will satisfy the properties
$A \bleq B \join C$ and $B \bleq A \join C$.
Let $A$, $B$ and $C$ be the gray regions shown in
Figure~\ref{fig:converge}(b)---the intention is that $A$ shares
a line segment with $C$, as does $B$.
Assume we start by exploiting $B \bleq A \join C$. 
We compute $A \join C$, yielding the polygon given by the topmost 
dashed line.
This allows us to trim the top portion of $B$. 
We then compute $B \join C$, and trim the top-left region of $A$.
However, now $A$ has changed, so we re-compute $A \join C$ and 
again reduce $B$.
This process asymptotically approaches the greatest 
fixed point $A = A \sqcap C,~B = B \sqcap C$.
\hfill \ebox
\end{example}
\begin{figure}[t]
\centerline{
\subfloat[]{
  \begin{tikzpicture}[baseline=(j.base)]
    \tikzstyle{vertex}=[draw,circle,minimum size=5pt,inner sep=0pt]
    \foreach \pos/\id/\lbl in {{180/0/0}, {40/i/i}, {-40/j/j}}
    {
      \node[vertex] (\id) at (\pos:1) {};
      \node at (\pos:1.3) {$\lbl$};
    }
    \path[draw, thick, -stealth] (0) -- node[above] {$A$} (i);
    \path[draw, thick, -stealth] (0) -- node[below] {$B$} (j);
    \path[draw, thick, stealth-stealth] (i) -- node[right] {$C$} (j);
  \end{tikzpicture}
}
\subfloat[]{
  \begin{tikzpicture}[baseline=(ap.base)]
    \coordinate (ax) at (-2, 0);
    \coordinate (bx) at (2.5, 0);
    \coordinate (ap) at (0, 0);
    \coordinate (bp) at (0.5, 0);
    \draw[black, thick, fill=black!15] (-2, 0) -- node[above left] {$A$} (0, 2) coordinate (a0) -- (0, 0) -- cycle;
    \draw[black, thick, fill=black!15] (2.5, 0) -- node[above right] {$B$} (0.5, 2) coordinate (b0) -- (0.5, 0) -- cycle;
    \draw[black, thick, fill=black!15, fill opacity=0.5] (-2, 0) rectangle node[below=2] {$C$} (2.5, -0.2);
    \draw[black, dashed] (bx) -- (intersection cs: first line={(bx) -- (a0)}, second line={(bp) -- (b0)}) coordinate (b1) -- (a0) -- (ax);
    \draw[black, dashed] (ax) -- (intersection cs: first line={(ax) -- (b1)}, second line={(ap) -- (a0)}) coordinate (a1) -- (b1) -- (bx);
    \draw[black, dashed] (bx) -- (intersection cs: first line={(bx) -- (a1)}, second line={(bp) -- (b1)}) coordinate (b2) -- (a1) -- (ax);
    \draw[black, dashed] (ax) -- (intersection cs: first line={(ax) -- (b2)}, second line={(ap) -- (a1)}) coordinate (a2) -- (b2) -- (bx);
    \draw[black, dashed] (bx) -- (intersection cs: first line={(bx) -- (a2)}, second line={(bp) -- (b2)}) coordinate (b3) -- (a2) -- (ax);
    \draw[fill=black!30] (ax) -- (a2) -- (ap) -- cycle;
    \draw[fill=black!30] (bx) -- (b3) -- (bp) -- cycle;
  \end{tikzpicture}
}
}
\caption{\label{fig:converge}
  At a fixed point, we have $A \bleq B \join C$ and $B \bleq A \join C$.  
  The regions $A$ and $B$ will be progressively reduced, indefinitely.
}
\end{figure}
If we modify the equations in
Example~\ref{ex-diverge} slightly, 
to $A \bleq (A \meet B) \join (A \meet C)$
(still a valid approximation of the concrete state),
convergence is immediate.

The fixed point process will clearly terminate for the 
interval domain, as the possible interval
end-points are drawn from the initial set.
We can also show that it is guaranteed to terminate for 
both octagons~\cite{Mine_Octagons_HOSC06} and convex 
polyhedra~\cite{Cousot_Halbwachs_POPL78}; proofs are given
in Appendix A.
Unfortunately, we do not yet have a more general characterisation of 
the lattices for which termination is (or is not) guaranteed.

\subsection{Abstract Transfer Functions}
In this section, we describe the abstract transfer functions necessary to perform array content analysis on
the language described in Section~\ref{sec-lang}.

\subsubsection{Variable assignment.}
The effect of a scalar assignment $x = f$ on an abstract state follows the behaviour of the underlying domain $L$.
We first project out the previous value of $x$ (assuming $x$ does not occur in $f$) then introduce the new constraint
into the scalar domain.
However, when we project $x$ from our scalar domain, we must also update all incoming and outgoing edges of $x$ (and $x{^+}$).
This becomes:
\longmath{
  $$\langle \varphi, \Psi \rangle \sem{\mathtt{x = f}} = \langle \varphi \sem{\mathtt{x = f}}, \Psi' \rangle$$
  \noindent
}
\shortmath{$\langle \varphi, \Psi \rangle \sem{\mathtt{x = f}} = \langle \varphi \sem{\mathtt{x = f}}, \Psi' \rangle$}
  where $\Psi'$ is given by, for all $p, q$:
  $$\psi'_{pq} = \left\{
      \begin{array}{cl}
        \top &~\mathbf{if}~p \in \{x, x{^+}\} \vee q \in \{x, x{^+}\} \\
        \exists x~.~\psi_{pq}~~ &~\mathbf{otherwise}
      \end{array}
    \right.
  $$
The assumption that $f$ is free of $x$ is not always well founded; 
however, we can always transform the program so that it \emph{is} the case:
\longmath{
\begin{equation*}
  \mathtt{x = f} \Rightarrow \left. \begin{array}{l}
                  \mathtt{x_t = x} \\
                  \mathtt{x = f} [\mathtt{x} / \mathtt{x_t} ]
               \end{array} \right.
\end{equation*}}%
\shortmath{$\mathtt{x = f} \Rightarrow \mathtt{x_t = x}; ~ \mathtt{x = f} [\mathtt{x} / \mathtt{x_t}]$.}
\noindent

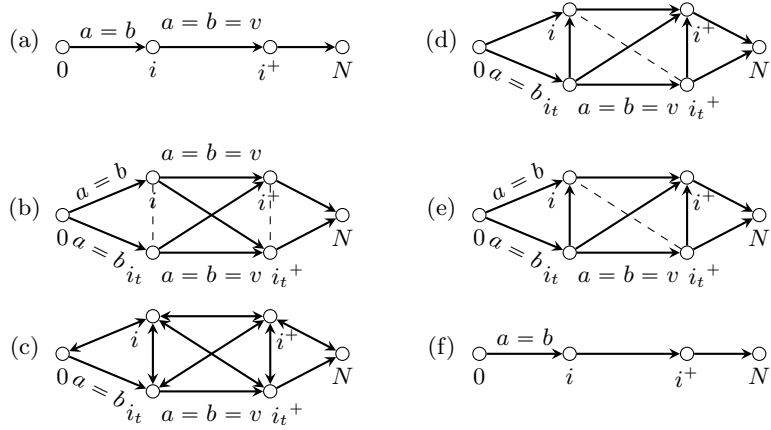
\begin{figure}[t]
  \begin{center}
  \begin{tabular}{ccccc}
    (a) &
    \begin{tikzpicture}[xscale=1.2,baseline=(0.base)]
      \tikzstyle{vertex}=[draw,circle,minimum size=5pt,inner sep=0pt]
      \foreach \pos/\id/\lbl in {{(0, 0)/0/0}, {(1, 0)/i/i}, {(2.3, 0)/ip/i{^+}}, {(3.1, 0)/N/N}}
      {
        \node[vertex] (\id) at \pos {};
        \path (\id) ++(0, -0.3) node {$\lbl$};
      }
      \foreach \src/\dest/\label in {{0/i/$a = b$},{i/ip/\wtab{$a = b = v$}},{ip/N/}}
        \path[draw, thick, -stealth] (\src) -- node[above] {\label} (\dest);
    \end{tikzpicture}
%%    \\
    & \qquad \qquad &
    (d) &
    \begin{tikzpicture}[xscale=1.2,baseline=(0.base)]
      \tikzstyle{vertex}=[draw,circle,minimum size=5pt,inner sep=0pt]
      \foreach \pos/\id/\lbl in {{(0, 0)/0/0}, {(3.1, 0)/N/N}}
      {
        \node[vertex] (\id) at \pos {};
        \path (\id) ++(0, -0.3) node {$\lbl$};
      }
      \node[vertex] (i) at (1, 0.5) {};
      \path (i) ++(-0.2, -0.3) node {$i$};
      \node[vertex] (ip) at (2.3, 0.5) {};
      \path (ip) ++(0.2, -0.3) node {$i{^+}$};
      \node[vertex] (j) at (1, -0.5) {};
      \path (j) ++(-0.2, -0.3) node {$i_t$};
      \node[vertex] (jp) at (2.3, -0.5) {};
      \path (jp) ++(0.2, -0.3) node {$i_t{^+}$};

      \foreach \src/\dest/\label in {{0/i/},{i/ip/},{ip/N/}}
        \path[draw, thick, -stealth] (\src) -- node[above,sloped] {\label} (\dest);
      \foreach \src/\dest/\label in {{0/j/$a = b$},{j/jp/\wtab{$a = b = v$}},{jp/N/}}
        \path[draw, thick, -stealth] (\src) -- node[below,sloped] {\label} (\dest);
      \foreach \src/\dest in {{jp/i}}
        \path[draw, dashed] (\src) -- (\dest);
      \foreach \src/\dest in {{j/i}, {j/ip}, {jp/ip}}
        \path[draw, thick, -stealth] (\src) -- (\dest);
    \end{tikzpicture} \\
    (b) &
    \begin{tikzpicture}[xscale=1.2,baseline=(0.base)]
      \tikzstyle{vertex}=[draw,circle,minimum size=5pt,inner sep=0pt]
      \foreach \pos/\id/\lbl in {{(0, 0)/0/0}, {(1, 0.5)/i/i}, {(2.3, 0.5)/ip/i{^+}},
                                                {(3.1, 0)/N/N}}
      {
        \node[vertex] (\id) at \pos {};
        \path (\id) ++(0, -0.3) node {$\lbl$};
      }
      \node[vertex] (j) at (1, -0.5) {};
      \path (j) ++(-0.2, -0.3) node {$i_t$};
      \node[vertex] (jp) at (2.3, -0.5) {};
      \path (jp) ++(0.2, -0.3) node {$i_t{^+}$};
      \foreach \src/\dest/\label in {{0/i/$a = b$},{i/ip/\wtab{$a = b = v$}},{ip/N/}}
        \path[draw, thick, -stealth] (\src) -- node[above,sloped] {\label} (\dest);
      \foreach \src/\dest/\label in {{0/j/$a = b$},{j/jp/\wtab{$a = b = v$}},{jp/N/}}
        \path[draw, thick, -stealth] (\src) -- node[below,sloped] {\label} (\dest);
      \path[draw, thick, -stealth] (j) -- (ip);
      \path[draw, thick, -stealth] (i) -- (jp);
      \path[draw, dashed] (i) -- (j);
      \path[draw, dashed] (ip) -- (jp);
    \end{tikzpicture} %% \\
    & \qquad &
    (e) &
    \begin{tikzpicture}[xscale=1.2,baseline=(0.base)]
      \tikzstyle{vertex}=[draw,circle,minimum size=5pt,inner sep=0pt]
      \foreach \pos/\id/\lbl in {{(0, 0)/0/0}, {(3.1, 0)/N/N}}
      {
        \node[vertex] (\id) at \pos {};
        \path (\id) ++(0, -0.3) node {$\lbl$};
      }
      \node[vertex] (i) at (1, 0.5) {};
      \path (i) ++(-0.2, -0.3) node {$i$};
      \node[vertex] (ip) at (2.3, 0.5) {};
      \path (ip) ++(0.2, -0.3) node {$i{^+}$};
      \node[vertex] (j) at (1, -0.5) {};
      \path (j) ++(-0.2, -0.3) node {$i_t$};
      \node[vertex] (jp) at (2.3, -0.5) {};
      \path (jp) ++(0.2, -0.3) node {$i_t{^+}$};

      \foreach \src/\dest/\label in {{0/i/$a=b$},{i/ip/},{ip/N/}}
        \path[draw, thick, -stealth] (\src) -- node[above,sloped] {\label} (\dest);
      \foreach \src/\dest/\label in {{0/j/$a = b$},{j/jp/\wtab{$a = b = v$}},{jp/N/}}
        \path[draw, thick, -stealth] (\src) -- node[below,sloped] {\label} (\dest);
      \foreach \src/\dest in {{jp/i}}
        \path[draw, dashed] (\src) -- (\dest);
      \foreach \src/\dest in {{j/i}, {j/ip}, {jp/ip}}
        \path[draw, thick, -stealth] (\src) -- (\dest);
    \end{tikzpicture} \\
    (c) &
    \begin{tikzpicture}[xscale=1.2,baseline=(0.base)]
      \tikzstyle{vertex}=[draw,circle,minimum size=5pt,inner sep=0pt]
      \foreach \pos/\id/\lbl in {{(0, 0)/0/0}, {(3.1, 0)/N/N}}
      {
        \node[vertex] (\id) at \pos {};
        \path (\id) ++(0, -0.3) node {$\lbl$};
      }
      \node[vertex] (i) at (1, 0.5) {};
      \path (i) ++(-0.2, -0.3) node {$i$};
      \node[vertex] (ip) at (2.3, 0.5) {};
      \path (ip) ++(0.2, -0.3) node {$i{^+}$};
      \node[vertex] (j) at (1, -0.5) {};
      \path (j) ++(-0.2, -0.3) node {$i_t$};
      \node[vertex] (jp) at (2.3, -0.5) {};
      \path (jp) ++(0.2, -0.3) node {$i_t{^+}$};

      \foreach \src/\dest/\label in {{0/i/},{i/ip/},{ip/N/}}
        \path[draw, thick, stealth-stealth] (\src) -- node[above,sloped] {\label} (\dest);
      \foreach \src/\dest/\label in {{0/j/$a = b$},{j/jp/\wtab{$a = b = v$}},{jp/N/}}
        \path[draw, thick, -stealth] (\src) -- node[below,sloped] {\label} (\dest);
      \foreach \src/\dest in {{i/j}, {i/jp}, {j/ip}, {ip/jp}}
        \path[draw, thick, stealth-stealth] (\src) -- (\dest);
    \end{tikzpicture} %% \\
    & \qquad &
    (f) &
    \begin{tikzpicture}[xscale=1.2,baseline=(0.base)]
      \tikzstyle{vertex}=[draw,circle,minimum size=5pt,inner sep=0pt]
      \foreach \pos/\id/\lbl in {{(0, 0)/0/0}, {(1, 0)/i/i}, {(2.3, 0)/ip/i{^+}}, {(3.1, 0)/N/N}}
      {
        \node[vertex] (\id) at \pos {};
        \path (\id) ++(0, -0.3) node {$\lbl$};
      }
      \foreach \src/\dest/\label in {{0/i/$a = b$},{i/ip/\wtab{}},{ip/N/}}
        \path[draw, thick, -stealth] (\src) -- node[above] {\label} (\dest);
    \end{tikzpicture}
    \\

  \end{tabular}
  \end{center}
  \caption{\label{fig-copyassign}
    Array content graph during analysis of the program given in Figure~\ref{fig-copy}, while
    executing $i = i+1$. 
The dashed edges indicate
    equality (in reality they represent two edges each labelled $\bot$).
  }
\end{figure}

\begin{example}
  Consider the array-copy program given in Figure~\ref{fig-copy}, immediately before the assignment
  $\mathtt{i = i+1}$.
  The state is shown in Figure~\ref{fig-copyassign}~(a). 

  We must first introduce a new variable $i_t$ to hold the prior value
  of $i$,
  transforming $i = i + 1$ to $i_t = i;\; i = i_t + 1$. The normalized graph
  after the $i_t = i$ statement is shown in Figure~\ref{fig-copyassign}~(b).
  To handle the $i = i_t + 1$ statement we must
  eliminate the annotations on edges corresponding
  to $i$ and $i{^+}$, resulting in state (c). Note that the edge $\psi_{0i_t{^+}}$, omitted from the diagram,
  has annotation $a = b$. State (d) completes the handling of the
  $i = i_t + 1$ statement, introducing the new value of $i$ into the
  scalar domain. The scalar domain discovers that $i_t{^+} = i$ (the dashed edge). When normalizing state (d),
  $\psi_{i_t{^+}i} = \bot$, so the rule $\psi_{0i} \bleq \psi_{0i_t{^+}} \join \psi_{i_t{^+}i}$ results in
  $\psi_{0i}$ becoming $a=b$, the desired invariant (e),
which after projecting out $i_t$ and $i_t^+$ gives (f).
\hfill \ebox
\end{example}

\subsubsection{Array reads.}
An array read $x = A[i]$ is relatively simple. As for standard variable
assignment, we must existentially quantify the variable $x$. But
instead of introducing a relation into the scalar domain, we add the constraint
$x = a$ to the singleton segment $\psi_{ii{^+}}$. This transfer function
may be formulated as
\longmath{
  $$\langle \varphi, \Psi \rangle \sem{\mathtt{x = A[i]}} = \langle \exists x.\varphi, \Psi' \rangle$$
\noindent}
\shortmath{$\langle \varphi, \Psi \rangle \sem{\mathtt{x = A[i]}} = \langle \exists x.\varphi, \Psi' \rangle$}
where $\Psi'$ is given by, for all $p,q$:
\begin{equation*}
  \psi'_{pq} = \left\{
    \begin{array}{cl}
      \psi_{pq} \sem{\mathtt{a = x}}~~ & \mathbf{if}~ p = i, q = i{^+} \\
      \exists x~.~\psi_{pq} & \mathbf{otherwise}
    \end{array}
    \right.
\end{equation*}
Normalization handles the consequences for the scalar part.

\subsubsection{Array writes.}
When we store a value into an array at index $i$, we update the 
corresponding edge property $\psi_{ii{^+}}$. 
However, this is not sufficient, as the singleton
$i$ may be covered by other edges. 
As with previous analyses, we distinguish between
\emph{strong updates} where all elements in a (generally, singleton) 
segment are updated with a given property, and 
\emph{weak updates}~\cite{Blanchet_Smashing} where some elements
of a segment \emph{may} be updated.
An edge $\psi_{pq}$ must be updated if $p \leq i \leq q \wedge p < q$ 
is possible in the current state. 
This is possible if and only if the edges $\psi_{pi{^+}}$ and 
$\psi_{iq}$ are both feasible.

Consider the array state illustrated in Figure~\ref{fig-write}, where
$0 \leq i \leq j \leq N$ and the first $j$ elements have been initialized to 0.
When we store $1$ at $A[i]$, we update the singleton $\psi_{ii{^+}}$ with the property
$a = 1$. 
However, there are other segments that may contain $A[i]$. The segments $\psi_{0i{^+}}$,
$\psi_{ij}$ and $\psi_{0j}$ are all consistent with index $i$. In this case, they must
all be weakly updated (annotations on $\psi_{0i{^+}}$ and $\psi_{ij}$ have been omitted; they
are identical to the annotation on $\psi_{0j}$).

\begin{figure}[t]
  \begin{center}
    \begin{tabular}{ccc}
    $S$ &
    \begin{tikzpicture}[xscale=1.2, baseline=(0.base)]
      \tikzstyle{vertex}=[draw,circle,minimum size=5pt,inner sep=0pt]
      \foreach \pos/\id/\lbl in {{(0, 0)/0/0}, {(1, 0.5)/i/i}, {(2, 0.5)/ip/i{^+}}, {(3, 0)/j/j}, {(3.5, 0)/N/N}}
      {
        \node[vertex] (\id) at \pos {};
        \path (\id) ++(0, -0.3) node {$\lbl$};
      }
      \foreach \src/\dest/\label in {{0/i/$a = 0$},{i/ip/$a = 0$},{ip/j/$a = 0$},
                                     {j/N/}}
        \path[draw, thick, -stealth] (\src) -- node[above,sloped] {\label} (\dest);
      \path[draw, thick, -stealth,bend right] (0) to node[below] {$a = 0$} (ip);
      \path[draw, thick, -stealth,bend right] (i) to node[below] {$a = 0$} (j);
      \path[draw, thick, -stealth] (0) .. controls (1, -1) and (2, -1) .. node[below] {$a = 0$} (j);
    \end{tikzpicture}
    &
    \begin{tabular}{c|ccccc}
     & $0$ & $i$ & $i^{+}$ & $j$ & $N$ \\ 
    \hline
    $0$ & $\bot$ & $a = 0$ & $a = 0$ & $a = 0$ & $\top$ \\
    $i$ & $\bot$ & $\bot$ & $a = 0$ & $a = 0$ & $\top$ \\
    $i^{+}$ & $\bot$ & $\bot$ & $\bot$ & $a = 0$ & $\top$ \\
    $j$ & $\bot$ & $\bot$ & $\bot$ & $\bot$ & $\top$ \\
    $N$ & $\bot$ & $\bot$ & $\bot$ & $\bot$ & $\bot$ \\
    \end{tabular}
    \\
    \\
    $S\sem{\mathtt{A[i] = 1}}$ &
    \begin{tikzpicture}[xscale=1.2, baseline=(0.base)]
      \tikzstyle{vertex}=[draw,circle,minimum size=5pt,inner sep=0pt]
      \foreach \pos/\id/\lbl in {{(0, 0)/0/0}, {(1, 0.5)/i/i}, {(2, 0.5)/ip/i{^+}}, {(3, 0)/j/j}, {(3.5, 0)/N/N}}
      {
        \node[vertex] (\id) at \pos {};
        \path (\id) ++(0, -0.3) node {$\lbl$};
      }
      \foreach \src/\dest/\label in {{0/i/$a = 0$},{i/ip/$a = 1$},{ip/j/$a = 0$},
                                     {j/N/}}
        \path[draw, thick, -stealth] (\src) -- node[above,sloped] {\label} (\dest);
      \path[draw, thick, -stealth,bend right] (0) to node[below=3] {\ldots} (ip);
      \path[draw, thick, -stealth,bend right] (i) to node[below=3] {\ldots} (j);
      \path[draw, thick, -stealth] (0) .. controls (1, -1) and (2, -1) .. node[below] {$a = 0 \join a = 1$} (j);
    \end{tikzpicture}
    &
    \begin{tabular}{c|ccccc}
     & $0$ & $i$ & $i^{+}$ & $j$ & $N$ \\ 
    \hline
    $0$ & $\bot$ & $a = 0$
        & $\left(\begin{array}{c} a = 0 \\ \join \\ a = 1\end{array}\right)$
        & $\left(\begin{array}{c} a = 0 \\ \join \\ a = 1\end{array}\right)$ & $\top$ \\
    $i$ & $\bot$ & $\bot$ & $a = 1$ & $\left(\begin{array}{c} a = 0 \\ \join \\ a = 1\end{array}\right)$ & $\top$ \\
    $i^{+}$ & $\bot$ & $\bot$ & $\bot$ & $a = 0$ & $\top$ \\
    $j$ & $\bot$ & $\bot$ & $\bot$ & $\bot$ & $\top$ \\
    $N$ & $\bot$ & $\bot$ & $\bot$ & $\bot$ & $\bot$ \\
    \end{tabular}

    \end{tabular}
  \end{center}
  \caption{\label{fig-write}
    In state $S$, we know that all elements between $0$ and $j$ have been initialized to $0$.
    When we evaluate $\mathtt{A[i] = 1}$, we update the edge $\psi_{ii{^+}}$. However, $i$ may
    also be covered by the edge $\psi_{0j}$. For these potentially overlapping edges, we must
    perform a \emph{weak update}, taking the join of the previous value with the new.
  }
\end{figure}
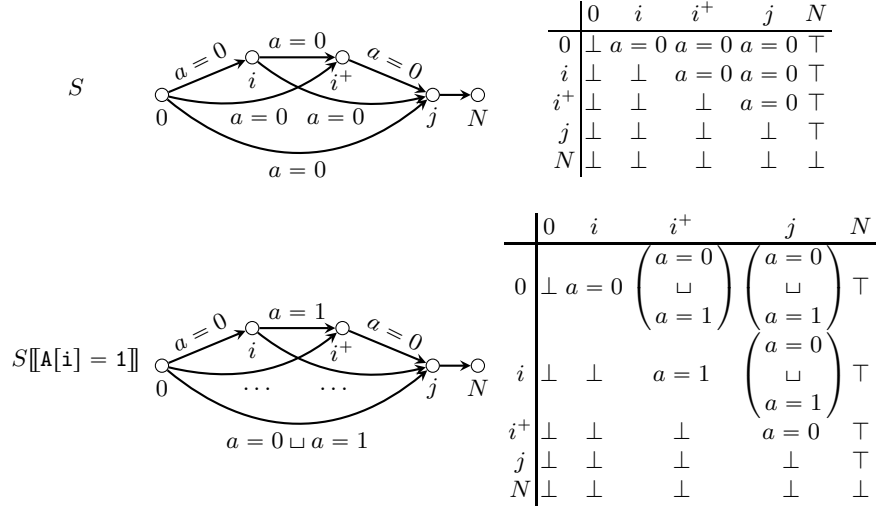
We can formulate this as $\langle \varphi, \Psi \rangle \sem{\mathtt{A[i] = f}}
= \langle \varphi, \Psi' \rangle$, where $\Psi'$ is given by, for all $p,q$:
\begin{equation*}
  \psi'_{pq} = \left\{
    \begin{array}{cl}
      \psi_{pq} \sem{\mathtt{a = f}} & \mathbf{if}~ p = i, q = i{^+} \\
      \psi_{pq} & \mathbf{if}~\varphi \models (p > i \vee q < i{^+}) \\
      \psi_{pq} \join \psi_{pq} \sem{\mathtt{a = f}}~~ & \mathbf{otherwise}  \\
    \end{array}
    \right.
\end{equation*}

\noindent
Notice that if we have some other variable $j$ such that $\varphi \models i = j$, we will initially perform
only a weak update of the segment $\psi_{jj{^+}}$.
However, the normalization procedure will
then enforce consistency between $\psi_{ii{^+}}$ and $\psi_{jj{^+}}$.

\section{Improving Efficiency through Relaxation}
\label{sec-efficiency}
When computing the strongest matrix entries, we perform a substantial amount of redundant work. We
include all constraints $\varphi$ from the scalar domain in each matrix entry $\psi_{ij}$, and these constraints
will be
processed during each step in the shortest-path computation. 
This is not ideal, as many are irrelevant to the content properties, 
and abstract domain operations are often proportional to the 
number of constraints. 
Hence we want
to construct some relaxation $\ddot{\psi}_{ij}$ of $\psi_{ij}$ that 
discards irrelevant
scalar properties. 
We shall use $\odot$ to denote this relaxation operation (that is,
$\ddot{\psi}_{ij} = \varphi \odot \psi_{ij}$). 
At a minimum, it must satisfy:
  $$\psi_{ij} \bleq \varphi \meet \ddot{\psi}_{ij} = \varphi \meet (\varphi \odot \psi_{ij})$$
We want to make $\varphi \meet \ddot{\psi}_{ij}$ as close to $\psi_{ij}$ 
as possible, 
while keeping the representation of $\ddot{\psi}_{ij}$ concise.
This is very similar to the process of constraint 
abduction~\cite{Maher_LICS05}.
However, even with a relaxation $\ddot{\Psi}$ such that $\varphi \meet \ddot{\psi}_{ij} \equiv \psi_{ij}$,
we may still lose relevant information.

\begin{example}
\label{ex-abloss}
Consider a program state (using the domain of \texttt{octagons}) 
with scalar property $\varphi = \sem{x < y}$,
and segment properties $\psi_{ik} = \varphi \meet \sem{i < k \wedge x < a}$, $\psi_{kj} = \varphi \meet \sem{k < j \wedge y = a}$.
Computing the value of $\psi_{ij}$ gives the expected
$\psi_{ij} = \varphi \meet \sem{i < j \wedge x < a}$.
However, although the relaxations 
\[
  \ddot{\psi}_{ik} = \sem{i < k \wedge x < a}
  \quad\mathrm{and}\quad
  \ddot{\psi}_{kj} = \sem{k < j \wedge y = a}
\]
are exact, computing $\psi_{ij}$ as before yields:
\[
  \sem{i < j \wedge i < k \wedge x < a} \join \sem{i < j \wedge k < j
    \wedge y = a} = \sem{i < j} \hfill \ebox
\]
\end{example}
We could avoid this loss of information by conjoining the 
scalar part during each step of the fixed point computation
  $$\ddot{\psi_{ij}} \bleq \varphi \odot
    ((\varphi \meet \ddot{\psi}_{ij} \meet \ddot{\psi}_{ik}) \join (\varphi \meet \ddot{\psi}_{ij} \meet \ddot{\psi}_{kj}))$$
but while this avoids the loss of information, it also defeats the 
original goal of reducing computation cost.
Instead we define a more conservative $\odot$ operation which maintains enough additional information
to retain properties of interest.

\begin{example}\label{ex-abkeep}
Consider again the analysis that was performed in Example~\ref{ex-abloss},
but now with a modified $\odot$ operation which gives us 
$\ddot{\psi}_{ik} = \sem{i < k \wedge x < a}$ and 
$\ddot{\psi}_{kj} = \sem{k < j \wedge y = a \wedge x < a}$.
In this case, when we compute $\ddot{\psi}_{ij}$, we get:
\longmath{
  $$\ddot{\psi_{ij}} = \sem{i < j \wedge i < k \wedge x < a} \join \sem{i < j \wedge k < j \wedge y = a \wedge x < a}
    = \sem{i < j \wedge x < a}$$}
\shortmath{
  $\ddot{\psi_{ij}} = \sem{i < j \wedge i < k \wedge x < a} \join \sem{i < j \wedge k < j \wedge y = a \wedge x < a}
    = \sem{i < j \wedge x < a}$.}
This maintains the property of interest, without repeatedly conjoining 
with $\varphi$ during the fixed point process.
\hfill \ebox
\end{example}

Another observation from Examples~\ref{ex-abloss} and~\ref{ex-abkeep} 
is that while the boundary constraints for each edge 
(such as $\sem{i < j}$ for $\psi_{ij}$) 
are not implied by $\varphi$, they are typically irrelevant
to the segment properties, unless:
\begin{itemize}
  \item $\varphi \meet \sem{i < j} = \bot$, in which case the edge must be empty, or
  \item the array content $A[\ell]$ is some function of the index $\ell$
  \item $\psi_{ji^{+}} = \bot$ (or $j = k^{+}$ for some $k$, and $\psi_{ki} = \bot$)
\end{itemize}
This is particularly troublesome for domains that explicitly store the transitive closure of constraints,
as we then expend substantial computation maintaining consequences of $\sem{i < j}$, which are largely irrelevant, and lost during the join.

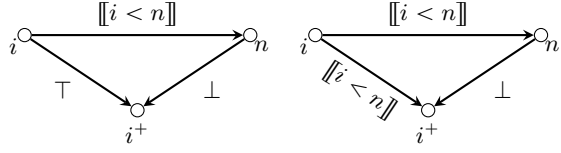
\begin{figure}[t]
\begin{center}
  \begin{tabular}{ccc}
  \begin{tikzpicture}[baseline=(0.base)]
    \tikzstyle{vertex}=[draw,circle,minimum size=5pt,inner sep=0pt]
    \foreach \pos/\off/\id/\lbl in
         {{(0, 0)/(-0.15, -0.15)/i/i},
          {(1.5, -1)/(0, -0.3)/ip/i{^+}},
          {(3, 0)/(0.15, -0.15)/n/n}}
    {
      \node[vertex] (\id) at \pos {};
      \path (\id) ++\off node {$\lbl$};
    }
    \foreach \src/\dest/\label in {{i/n/$\sem{i < n}$}}
      \path[draw, thick, -stealth] (\src) -- node[above] {\label} (\dest);
    \foreach \src/\dest/\label in {{n/ip/$\bot$}}
      \path[draw, thick, -stealth] (\src) -- node[below right] {\label} (\dest);
    \foreach \src/\dest/\label in {{i/ip/$\top$}}
      \path[draw, thick, -stealth] (\src) -- node[below left] {\label} (\dest);
  \end{tikzpicture}
  & \qquad &
  \begin{tikzpicture}[baseline=(0.base)]
    \tikzstyle{vertex}=[draw,circle,minimum size=5pt,inner sep=0pt]
    \foreach \pos/\off/\id/\lbl in
         {{(0, 0)/(-0.15, -0.15)/i/i},
          {(1.5, -1)/(0, -0.3)/ip/i{^+}},
          {(3, 0)/(0.15, -0.15)/n/n}}
    {
      \node[vertex] (\id) at \pos {};
      \path (\id) ++\off node {$\lbl$};
    }
    \foreach \src/\dest/\label in {{i/n/$\sem{i < n}$}}
      \path[draw, thick, -stealth] (\src) -- node[above] {\label} (\dest);
    \foreach \src/\dest/\label in {{n/ip/$\bot$}}
      \path[draw, thick, -stealth] (\src) -- node[below right] {\label} (\dest);
    \foreach \src/\dest/\label in {{i/ip/$\sem{i < n}$}}
      \path[draw, thick, -stealth] (\src) -- node[below,sloped] {\label} (\dest);
  \end{tikzpicture}
  \end{tabular}
\end{center}
\caption{\label{fig-loss}
  Solving this set of constraints, we strengthen $\psi_{ii{^+}}$ and
  derive $\sem{i < n}$. If we use $\odot$,
  we have $\ddot{\psi}_{in} = \top$, so we fail to strengthen $\ddot{\psi}_{ii{^+}}$.}
\end{figure}

If we choose an operator $\odot$ which discards the consequences of 
edge boundaries, we must take particular care not to lose information in
the third case mentioned.
Consider the state shown in Figure~\ref{fig-loss},
where the edge from $n$ to $i{^+}$ is infeasible. Computing
the original fixed point, we obtain $\sem{i < n}$ at $\psi_{ii{^+}}$, 
which is then lifted out to the scalar domain. 
However, $\sem{i < n}$ is obviously implied by 
$\varphi \meet \sem{i < n}$, so will typically be discarded by $\odot$. 
The property $\sem{i < n}$ is not obtained at $\psi_{ii^{+}}$, 
so is never lifted out to the scalar domain.
We must, therefore, add the following case to the normalization rules 
given in Section~\ref{sec-normal}:
\begin{enumerate}
\setcounter{enumi}{3}
\item
  $\varphi \bleq \sem{i \geq j} ~\textbf{if}~ \psi_{i, j} = \bot$
\end{enumerate}
This ensures that any scalar properties resulting from infeasible 
segments are included in the scalar domain.
Further, since many elements of $\ddot{\Psi}$ may be $\top$,
the relaxation operator allows us to take advantage of
sparse matrix representations.

Unfortunately we are not aware of any existing, general operations 
suitable for computing $\ddot{\Psi}$;
as they must consider both
underlying lattice and the characteristics of the implementation.
We complete this section by outlining suitable relaxation operators 
for difference-bound matrices (DBMs)~\cite{dill-dbm} 
(and \texttt{octagons}).
Relaxations for \texttt{polyhedra} can be found in the appendix, 
and it should not be particularly difficult to 
define analogous operators for alternative domains.

A value in the \texttt{DBM} (or \texttt{octagon}) domain consists of a set of constraints
$\sem{v_i - v_j \leq k}$ (or $\sem{\pm v_i \pm v_j \leq k}$ for \texttt{octagon}).
We can construct a relaxation of $\psi_{ij}$ by computing the transitive closure $\psi^{\star}_{ij}$,
and discarding any constraints implied by $\varphi \meet \sem{i < j}$:
  $$\varphi \odot \psi_{ij} = \{ c ~|~ c \in \psi^{\star}_{ij},~\varphi \meet \sem{i < j} \not \vdash c \}$$
If the abstract states are stored in closed form,
we can simply collect the constraints of $\psi_{ij}$ not appearing in $\varphi \meet \sem{i < j}$.
Or we can avoid performing many implication tests by instead collecting the constraints
involving variables in $U$.

\section{Experimental Evaluation}
\label{sec-exper}
We have implemented the analysis in \sys{sparcolyzer}, a prototype array content analyser for the language described
in Section~\ref{sec-lang}. \sys{sparcolyzer} is implemented in \sys{ocaml}, using the \sys{Fixpoint}
library\footnote{\url{http://pop-art.inrialpes.fr/people/bjeannet/bjeannet-forge/fixpoint/}}.
For the underlying domain, we implemented the DBM domain in C++,
customized for operating on sparse graphs.
Experiments were performed on a 3.0GHz Core 2 Duo with 4Gb ram running Ubuntu Linux 12.04.
The set \Va\ of segment bounds were pre-computed using a simple data-flow analysis to collect
all variables which may (possibly indirectly) be involved in the computation of an array index.

We tested \sys{sparcolyzer} on a number of array manipulation program fragments.
Most of these were taken from Halbwachs and P{\'e}ron 
\cite{Halbwachs_Peron_PLDI08}. 
We added several additional fragments that illustrate 
interesting properties of the analysis, including
members of the \texttt{init\_rand} family discussed in 
Section~\ref{sec-intro} (Figure~\ref{fig-initrand}).

Computation time for each instance is given in Table~\ref{tab:times}.
For instances taken from \cite{Halbwachs_Peron_PLDI08}, 
we include the original reported runtimes,
although these are not directly comparable, as the experiments
in \cite{Halbwachs_Peron_PLDI08} were performed on a
slower machine (Core2 Duo 1.6 GHz, with 2MB of RAM)
using the domain of 
\emph{difference bound matrices with disequalities
(\texttt{dDBM})}~\cite{Halbwachs_Peron_VMCAI07}.%
\footnote{Performance of the domains should be roughly equivalent, 
as in the absence of explicit disequalities, \texttt{dDBM} behaves 
identically to \texttt{DBM}.}
The implementation from \cite{Halbwachs_Peron_PLDI08} has not been
available and we have not tried to reconstruct it.

Table~\ref{tab:times} compares the runtimes of two variants of our
content domain to the approach of Halbwachs and P{\'e}ron
\cite{Halbwachs_Peron_PLDI08} (the hp$_{08}$ column).
The \texttt{naive} variant
uses the direct implementation, where the matrix is represented as a 
$\abs{V'}\times\abs{V'}$ array, 
and a copy of the scalar domain is stored in each matrix entry.
The \texttt{sparse} variant stores, for each row and column, 
a set of non-$\top$ entries so that the normalization, $\join$ and 
$\meet$ operations do not need to process entries that will definitely
remain $\top$, and computes the fixed point on the relaxed matrix
$\ddot{\Psi}$, rather than directly on $\Psi$. 
\texttt{sparse} uses the simple relaxation step of discarding all 
constraints not involving some array variable $A \in U$.
Note that \texttt{sparse} still iterates over all 
$\abs{V'}\times\abs{V'}$ elements when changes to the scalar domain
occur, as a change to the scalar domain may affect any matrix element.

In cases where there are very few partitions---either because there are very few index
variables, or they are highly constrained---we expect the partition-based methods to be faster
(as they do not need to compute closure over transitive edges).
The performance of \texttt{naive} is comparable
to that of \cite{Halbwachs_Peron_PLDI08} on instances with few 
partitions, and it improves substantially on more complex instances.
\texttt{sparse} is faster yet, sometimes by several orders of magnitude, and still
finds the desired invariant in all but two
cases.

\begin{table}[t]
\begin{center}
\setlength{\tabcolsep}{9pt}
\begin{tabular}{||l||c|c||c||}
  \hhline{|t:=:t:==:t:=:t|}
  \textbf{program} & \texttt{naive} & \texttt{sparse} & \texttt{hp$_{08}$ \cite{Halbwachs_Peron_PLDI08}} \\
  \hhline{||-||-|-||-||}
  \tt init & 0.11 & 0.02 & \\
  \tt init\_offset & 0.28 & 0.07 & \hphantom{2}0.05 \\
  \tt init\_rand$_2$ & 2.48 & 0.14 & \\
  \tt init\_rand$_3$ & 9.59 & 0.64 & \\
  \tt init\_rand$_4$ & 31.14 & 1.98 & \\
  \tt init\_rand$_5$ & 80.58 & 4.96 & \\
  \tt arraymax & 0.13 & \llap{$<$}0.01  & \hphantom{2}0.10 \\
  \tt copy & 0.13 & \llap{$<$}0.01 & \hphantom{2}0.02 \\
  \tt partition\_hoare & 1.51 & 0.06 & \\
  \tt partition\_hp08 & 3.50 & 0.14\rlap{$^\dagger$} & 22.87 \\
  \tt sentinel & 0.14 & \llap{$<$}0.01 & \hphantom{2}0.21 \\
  \tt first\_nonnull & 0.60\rlap{$^\dagger$} & 0.01\rlap{$^\dagger$} & \hphantom{2}2.25 \\
  \hhline{|b:=:b:==:b:=:b|}
\end{tabular}
\end{center}
\caption{\label{tab:times}
  Analysis times in seconds.
  Instances where we were unable to prove the desired
  invariant are marked with $\dagger$.}
  \vspace*{-2em}
\end{table}

It is interesting to compare the behaviour of \texttt{sentinel} and \texttt{first\_nonnull}. These programs superficially
appear quite similar; in both cases, we set up an `end-of-array' marker, then scan the array to find a particular element.
However, the invariants necessary to prove the desired properties are quite different.
In the case of \texttt{first\_nonnull}, we require:
  $$(s = n \wedge \forall e \in [0, n) ~.~ A[e] \neq 0)
     \vee (s < n \wedge A[s] = 0 \wedge \forall e \in [0, s) ~.~ A[e] \neq 0)$$
This can be expressed using the approaches of
Gopan \emph{et al.} and Halbwachs
and P{\'e}ron~\cite{Gopan_POPL05,Halbwachs_Peron_PLDI08},
as they store a separate invariant for each
total ordering amongst the partition variables. 
Our approach, however, cannot handle such disjunctive reasoning,
so the segment property quickly reaches $\top$.
\begin{figure}[t]
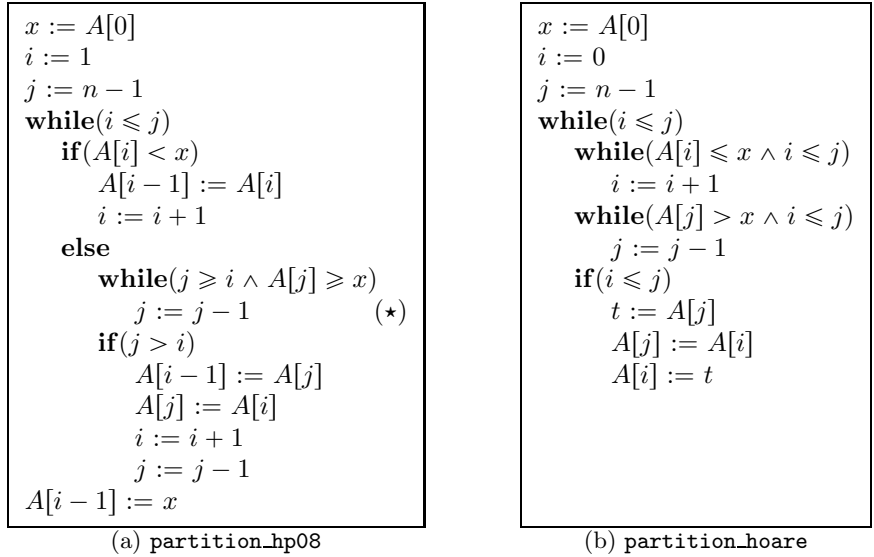

\centerline{
  \setlength{\tabcolsep}{15pt}
  \begin{tabular}{cc}
  \pcode{
    $x$ := $A[0]$ \\
    $i$ := $1$ \\
    $j$ := $n-1$ \\
    \textbf{while}($i \leq j$) \\
    \> \textbf{if}($A[i] < x$) \\
    \> \> $A[i-1]$ := $A[i]$ \\
    \> \> $i$ := $i+1$ \\
    \> \textbf{else} \\
    \> \> \textbf{while}($j \geq i \wedge A[j] \geq x$) \\
    \> \> \> $j$ := $j-1$ ~~~~~~~~~~~~~($\star$) \\
    \> \> \textbf{if}($j>i$) \\
    \> \> \> $A[i-1]$ := $A[j]$ \\
    \> \> \> $A[j]$ := $A[i]$ \\
    \> \> \> $i$ := $i+1$ \\
    \> \> \> $j$ := $j-1$ \\
    $A[i-1]$ := $x$
  }
  &
  \pcode{
    $x$ := $A[0]$ \\
    $i$ := $0$ \\
    $j$ := $n-1$ \\
    \textbf{while}($i \leq j$) \\
    \> \textbf{while}($A[i] \leq x \wedge i \leq j$) \\
    \> \> $i$ := $i+1$ \\
    \> \textbf{while}($A[j] > x \wedge i \leq j$) \\
    \> \> $j$ := $j-1$ \\
    \> \textbf{if}($i \leq j$) \\
    \> \> $t$ := $A[j]$ \\
    \> \> $A[j]$ := $A[i]$ \\
    \> \> $A[i]$ := $t$ \\
    \\
    \\
    \\
  } \\
  (a) \texttt{partition\_hp08} & (b) \texttt{partition\_hoare}
  \end{tabular}
}
\caption{\label{fig-part}
  Quicksort partitioning:
  (a) as done in~\cite{Halbwachs_Peron_PLDI08};  and (b) a la Hoare
}
\end{figure}

Consider \texttt{partition\_hp08}
(the variant of the Quicksort partition step given
in~\cite{Halbwachs_Peron_PLDI08}, shown in Figure~\ref{fig-part}(a)).
As the imperative source
language we use does allow loads inside conditionals, such reads
are hoisted outside the corresponding loops; for example, the
loop marked $(\star)$ is transformed as shown below.
\begin{wrapfigure}[8]{r}{0.35\textwidth}
\vspace*{-1ex}
\centerline{
\pcode{
  $ej$ := $A[j]$ \\
  \textbf{while}($j \geq i \wedge ej \geq x$) \\
  \> $j$ := $j-1$ \\
  \> $ej$ := $A[j]$
}
}
\end{wrapfigure}
At the point marked $(\star)$, it is possible to determine that 
$\sem{j < i} \Rightarrow \sem{A[j] < x}$.  Thus
it is easy to show that $\sem{A[j] < x}$ holds at the loop exit. 
In the hoisted version, \texttt{naive} method can prove the invariant 
successfully, because the property $\sem{j < i} \Rightarrow \sem{ej < x}$
is derived for the edge $\psi_{ji}$. When we exit the loop with 
$\sem{j < i}$, this property gets extracted to the scalar domain, 
and we get $\sem{ej < x} \wedge \sem{ej = a}$ at $\psi_{jj'}$.
When using the \texttt{sparse} method, however, 
the property on $\psi_{ji}$ is discarded, 
as it involves only scalar variables, so the invariant is lost.

If we were to use the original version without hoisting, we would be
unable to prove the invariant using either method, as we cannot
express $\sem{j < i} \Rightarrow \sem{A[j] < x}$ directly.

However, our method easily proves the standard version
(\texttt{partition\_hoare}; Figure~\ref{fig-part}(b)) correct,
whether or not the reads are hoisted.

\section{Conclusion and Future Work}
\label{sec-future}
We have described a new approach to automatic discovery of array 
properties, inspired by algebraic shortest-path algorithms.
This approach retains much of the expressiveness of the partitioning
methods of~\cite{Gopan_POPL05} and~\cite{Halbwachs_Peron_PLDI08},
but avoids the need for syntax dependence and an up-front factorial
partitioning step.
The method can successfully derive invariants for a range of interesting
array program fragments,
and is substantially faster than partitioning-based approaches
with even modest numbers of index variables.

Several improvements could be made to the performance of the analysis.
The current implementation does not take advantage of liveness 
information, and maintains entries in the content graph for all 
variables in $\Va$ at each step. Clearly, performance could be
improved by eliminating non-live variables from the matrix.

Algorithms which maintain shortest path information can often be 
improved by storing only the \emph{transitive reduction} of the graph.
As our domains are not distributive, it is non-trivial to
determine whether a given edge must occur in the transitive reduction; 
however, it would be worth investigating whether maintaining the 
transitive reduction would prove beneficial.

\bibliographystyle{abbrv}
\bibliography{refs}

\begin{thebibliography}{10}

\bibitem{Blanchet_Smashing}
B.~Blanchet, P.~Cousot, R.~Cousot, J.~Feret, L.~Mauborgne, A.~Min{\'e},
  D.~Monniaux, and X.~Rival.
\newblock Design and implementation of a special-purpose static program
  analyzer for safety-critical real-time embedded software.
\newblock In T.~Mogensen, D.~Schmidt, and I.~H. Sudborough, editors, {\em The
  Essence of Computation: Complexity, Analysis, Transformation}, volume 2566 of
  {\em LNCS}, pages 85--108. Springer, 2002.

\bibitem{Bodik_ABCD}
R.~Bod\'{\i}k, R.~Gupta, and V.~Sarkar.
\newblock {ABCD}: Eliminating array bounds checks on demand.
\newblock In {\em ACM Symposium on Programming Language Design and
  Implementation (PLDI'00)}, pages 321--333. ACM Press, 2000.

\bibitem{Cousot_Logozzo_POPL11}
P.~Cousot, R.~Cousot, and F.~Logozzo.
\newblock A parametric segmentation functor for fully automatic and scalable
  array content analysis.
\newblock In {\em Proceedings of the 38th ACM Symposium on Principles of
  Programming Languages}, pages 105--118. ACM Press, 2011.

\bibitem{Cousot_Halbwachs_POPL78}
P.~Cousot and N.~Halbwachs.
\newblock Automatic discovery of linear constraints among variables of a
  program.
\newblock In {\em Proceedings of the Fifth ACM Symposium on Principles of
  Programming Languages}, pages 84--97. ACM Press, 1978.

\bibitem{dill-dbm}
D.~L. Dill.
\newblock Timing assumptions and verification of finite-state concurrent
  systems.
\newblock In {\em Automatic Verification Methods for Finite State Systems},
  volume 407 of {\em LNCS}, pages 179--212. Springer, 1989.

\bibitem{Dillig_ESOP10}
I.~Dillig, T.~Dillig, and A.~Aiken.
\newblock Fluid updates: Beyond strong vs. weak updates.
\newblock In A.~D. Gordon, editor, {\em Proceedings of the 19th European
  Symposium on Programming}, volume 6012 of {\em LNCS}, pages 246--266.
  Springer, 2010.

\bibitem{Gopan_POPL05}
D.~Gopan, T.~Reps, and M.~Sagiv.
\newblock A framework for numeric analysis of array operations.
\newblock In {\em Proceedings of the 32nd ACM Symposium on Principles of
  Programming Languages}, pages 338--350. ACM Press, 2005.

\bibitem{Gulwani_Lifting_POPL08}
S.~Gulwani, B.~McCloskey, and A.~Tiwari.
\newblock Lifting abstract interpreters to quantified logical domains.
\newblock In {\em Proceedings of the 35th ACM Symposium on Principles of
  Programming Languages}, pages 235--246. ACM Press, 2008.

\bibitem{Halbwachs_Peron_PLDI08}
N.~Halbwachs and M.~P{\'e}ron.
\newblock Discovering properties about arrays in simple programs.
\newblock In {\em ACM Symposium on Programming Language Design and
  Implementation (PLDI'08)}, pages 339--348. ACM Press, 2008.

\bibitem{Maher_LICS05}
M.~J. Maher.
\newblock Herbrand constraint abduction.
\newblock In {\em Proceedings of the 20th IEEE Symposium on Logic in Computer
  Science}, pages 397--406. IEEE Comp.\ Soc., 2005.

\bibitem{Mine_Octagons_HOSC06}
A.~Min{\'e}.
\newblock The octagon abstract domain.
\newblock {\em Higher-Order and Symbolic Computation}, 19(1):31--100, 2006.

\bibitem{mohri-semiring}
M.~Mohri.
\newblock Semiring frameworks and algorithms for shortest-distance problems.
\newblock {\em Journal of Automata, Languages and Combinatorics},
  7(3):321--350, 2002.

\bibitem{Halbwachs_Peron_VMCAI07}
M.~P{\'e}ron and N.~Halbwachs.
\newblock An abstract domain extending difference-bound matrices with
  disequality constraints.
\newblock In B.~Cook and A.~Podelski, editors, {\em Verification, Model
  Checking and Abstract Interpretation}, volume 4349 of {\em LNCS}, pages
  268--282. Springer, 2007.

\bibitem{Schrijver}
A.~Schrijver.
\newblock {\em Theory of Linear and Integer Programming}.
\newblock Wiley, 1986.

\bibitem{SuzukiI77}
N.~Suzuki and K.~Ishihata.
\newblock Implementation of an array bound checker.
\newblock In {\em Proceedings of the Fourth ACM Symposium on Principles of
  Programming Languages}, pages 132--143. ACM Press, 1977.

\bibitem{XiP98}
H.~Xi and F.~Pfenning.
\newblock Eliminating array bound checking through dependent types.
\newblock In {\em ACM Symposium on Programming Language Design and
  Implementation (PLDI'98)}, pages 249--257. ACM Press, 1998.

\end{thebibliography}

\newpage
\appendix

\section*{Appendix A: Termination Proofs}\label{appendix-term}
\begin{theorem}
The shortest-path computation terminates for the octagon domain.
\end{theorem}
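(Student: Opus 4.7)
The plan is to prove termination by showing that every DBM entry in the matrix representation of $\Psi$ can only take finitely many distinct values during the iteration.

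First I would represent each octagon $\psi_{ij}$ by its closed DBM over the $2|U_I|$ potential variables, so that $\Psi$ becomes a family of DBMs $M^{ij}$ with entries in $\mathbb{Z} \cup \{+\infty\}$. The iteration rule $\psi_{ij} \leftarrow \psi_{ij} \meet (\psi_{ik} \join \psi_{kj})$, together with re-closing each $\psi_{ij}$ after the update, uses only $\min$ (from $\meet$), $\max$ (from $\join$ of closed octagons), and integer addition (from re-closure). All three operations only tighten entries, so each $M^{ij}[a, b]$ is a monotonically non-increasing sequence of integers (or $+\infty$). Hence termination reduces to showing that each entry takes only finitely many distinct values.

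For the latter, my plan is a reduction to shortest-path structure on the product graph $G$ with node set $V' \times [2|U_I|]$. I would argue that at every iteration step, each $M^{ij}[a, b]$ is expressible as a bounded $\min$-$\max$ combination of sums of initial DBM entries, where each sum has length at most $|V'| \cdot 2|U_I|$ (the length of a simple path in $G$). Since the set $C_0$ of constants appearing in the initial $\Psi$ is finite, any such bounded expression over $C_0$ yields values in a finite subset of $\mathbb{Z} \cup \{+\infty\}$. Combined with monotonicity and the finiteness of the number of DBM entries across all $M^{ij}$, this gives termination.

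The main obstacle will be making the reduction to the product graph precise. The difficulty is that $\join$ on octagons is entrywise max rather than addition, so the normalization does not correspond directly to a classical $(\min, +)$ shortest-path computation; and the interplay between cross-segment $\join$ operations and within-segment octagon closure prevents the iteration from being carried out by a single Floyd--Warshall-style pass. I would likely handle this by induction on iteration step: show that after step $t$ the value of any $M^{ij}[a, b]$ is a $\min$-$\max$ combination of bounded-length sums drawn from values present after step $t-1$, and argue that only finitely many distinct such combinations over $C_0$ exist, so the descending sequence at each entry is drawn from a fixed finite set and must stabilize.
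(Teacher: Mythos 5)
Your high-level strategy---entrywise monotone descent of closed DBM entries inside a finite set of reachable values---is the same as the paper's, but the step you yourself flag as the main obstacle is precisely where the argument is incomplete, and your proposed repair does not close it. The induction ``each entry after step $t$ is a bounded min--max combination of bounded-length sums of values present after step $t-1$'' composes badly: unfolding it back to step $0$ turns a sum of at most $n$ step-$(t-1)$ values, each itself a sum of at most $n$ step-$(t-2)$ values, into sums of up to $n^{t}$ \emph{initial} constants, so the reachable set you exhibit is not a fixed finite subset of $\mathbb{Z}\cup\{+\infty\}$. The simple-path bound on sum length is valid for a \emph{single} closure of a single meet, but normalization interleaves meets, closures and joins indefinitely, and nothing in your sketch prevents the effective path length over the initial constraint graph from growing with the iteration count. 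Two secondary issues: your product graph $V'\times[2\abs{U_I}]$ conflates the segment graph over $V'$ (on which $\psi_{ij}\bleq\psi_{ik}\join\psi_{kj}$ operates) with each octagon's internal potential-variable graph (on which closure operates); and you analyse the rule $\psi_{ij}\meet(\psi_{ik}\join\psi_{kj})$, whereas the paper explicitly switches to the distributed form $x_i\bleq(x_i\meet x_j)\join(x_i\meet x_k)$ before proving termination, having shown the undistributed form can diverge for polyhedra.

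The paper's proof supplies exactly the missing ingredient. Because the join of closed octagons is an entrywise maximum, it \emph{selects} bounding hyperplanes from its operands rather than manufacturing new ones, so $\mathcal{BP}((x\meet y)\join(x\meet z))\subseteq\mathcal{BP}(x\meet y)\cup\mathcal{BP}(x\meet z)$; from this the paper argues that every hyperplane that can ever appear lies in the fixed finite set $\bigcup_{X'\subseteq X}\mathcal{BP}(\glb X')$, indexed by subsets of the \emph{initial} states rather than by the previous iterate. Finiteness of that set, together with the fact that each productive step strictly tightens some hyperplane, yields termination with the explicit bound $\abs{X}\cdot\abs{\bigcup_{x\in X}\mathcal{BP}(x)}$ on the number of descending steps. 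To rescue your version you would need an invariant of the same flavour stated directly over the initial data---for instance, that every reachable closed entry coincides with an entry of $\glb X'$ for some subset $X'$ of the initial octagons, i.e.\ with a shortest \emph{simple} path in a superposition of the initial constraint graphs. Establishing that invariant is the real content of the theorem, and it is what your step-to-step induction leaves unproved.
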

\begin{proof}
Consider an initial set of abstract states $X = [x_1, \ldots, x_n]$, 
and a system of inequalities of the form 
$x_i \bleq (x_i \meet x_j) \join (x_i \meet x_k)$.
Let $\mathcal{BP}(x)$ denote the bounding hyperplanes of $x$. 
We then have:
    $\mathcal{BP}((x \meet y) \join (x \meet z)) \subseteq
        \mathcal{BP}(x \meet y)
        \cup \mathcal{BP}(x \meet z)$.
Each time one of the equations is evaluated, each bounding hyperplane 
is an element of:
\longmath{
    $$\bigcup_{X' \subseteq X} \mathcal{BP}(\glb X')$$
}
\shortmath{$\bigcup_{X' \subseteq X} \mathcal{BP}(\glb X')$}
As the initial set of bounding hyperplanes is finite, and each 
iteration must tighten at least one bounding plane, 
the tightening process must eventually terminate; 
in fact, the number of descending steps is bounded by:
\longmath{
\[
    \abs{X} \abs{\bigcup_{x \in X} \mathcal{BP}(x)}
\quad \qed
\]
}
\shortmath{$\abs{X} \times \abs{\bigcup_{x \in X} \mathcal{BP}(x)}$
\qed}
\end{proof}
In the case of octagons, projection cannot introduce new bounding 
hyperplanes, so the addition of the propagation rules 
($\varphi \bleq \exists~U_I~.~\varphi_{ii^{+}}$ and $\varphi_{ij} \bleq \psi$)
does not affect termination.

\begin{theorem}
\label{thm-poly}
The shortest-path computation terminates for the convex polyhedron domain.
%convex polyhedra~\cite{Cousot_Halbwachs_POPL78}.
\end{theorem}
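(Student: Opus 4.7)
The plan is to parallel the octagon argument of Theorem~1 but with the primal and dual roles swapped. For octagons the argument worked because joins cannot create new bounding-hyperplane orientations; for polyhedra this primal property fails (convex hull can synthesise diagonal facets), but the dual property holds: $\mathrm{gen}(P \join Q) \subseteq \mathrm{gen}(P) \cup \mathrm{gen}(Q)$, so joins cannot create new generators. I would therefore work in the double-description representation throughout. First I would verify that the modified rewrite $\psi_{ij} \leftarrow (\psi_{ij} \meet \psi_{ik}) \join (\psi_{ij} \meet \psi_{kj})$---the well-behaved variant motivated by Example~3, and the only form used for polyhedra---produces a monotonically non-increasing sequence at each entry, because each meet is contained in $\psi_{ij}$ and convex hull is monotone. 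It therefore suffices to show that only finitely many distinct polyhedra can ever arise at each edge.

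Let $V_0$ denote the finite set of generators across all initial $\psi_{ij}$, and define a closure $V^{\infty}$ by saturation: start from $V_0$ and repeatedly add every point obtained as an intersection of $d$ hyperplanes, each spanned by a $d$-tuple of generators already in the set. I would prove by induction on the iteration step that the generators of every $\psi_{ij}^{(n)}$ lie in $V^{\infty}$. For the join step the dual inclusion above does the work directly. For the meet step, new generators appear only at intersections of facets of the two operands; those facets are, by induction, spanned by generators already in $V^{\infty}$, so the new generators sit in $V^{\infty}$ as well. The bounding hyperplanes ever appearing therefore lie in a finite associated set $\mathcal{H}^{\infty}$ of hyperplanes spanned by $d$-tuples of $V^{\infty}$.

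The main obstacle, and the place where the polyhedral case is genuinely harder than the octagon case, is to show that the closure $V^{\infty}$ is in fact finite---without this the induction buys us nothing. To close this gap I would exploit the fact that every polyhedron in the iteration is contained in the (fixed) initial $\psi_{ij}^{(0)}$, so all generators lie in a bounded region, together with the observation that every operation in the closure step produces points whose coordinates are rational functions of the initial data; taken together, $V^{\infty}$ stays inside a bounded rational lattice with a uniformly bounded common denominator, and is hence finite. Combined with the descending-chain observation at each edge, the iteration must stabilise in a bounded number of steps, yielding a termination bound analogous to the $\abs{X}\abs{\bigcup_{x \in X} \mathcal{BP}(x)}$ bound of Theorem~1.
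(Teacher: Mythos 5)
Your overall strategy---work in the generator (double-description) representation, observe that $\join$ cannot create new generators, and confine the generators of every iterate to a finite set---is the same as the paper's, and your restriction to the modified rewrite $(x_i \meet x_j) \join (x_i \meet x_k)$ matches the paper's setup. The difference, and the gap, is in how the finite set is obtained. The paper fixes the candidate set \emph{in advance}: rays are confined to $R_V \cup \bigcup_{x \in X} R_x$, and, while the ray sets are stable, every newly introduced extreme point is an extreme point of $\bigcap_{x \in X'} x$ for some subset $X'$ of the \emph{initial} states, a set that is finite because $X$ is. You instead define $V^{\infty}$ by a self-referential saturation (intersections of hyperplanes spanned by \emph{previously generated} points) and must then prove $V^{\infty}$ finite; that is exactly the step that fails. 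The claim that the saturation stays in ``a bounded rational lattice with a uniformly bounded common denominator'' is false: each intersection step produces coordinates whose denominators involve determinants of the spanning tuples, these grow without bound under iteration, and a bounded region of $\ratn^d$ contains infinitely many rationals of unbounded denominator. Example~\ref{ex-diverge} (Figure~\ref{fig:converge}) is a direct counterexample to the finiteness of your closure: the successive trimming points are obtained precisely by intersecting lines spanned by previously generated vertices, lie in a bounded region, and form an infinite non-repeating (merely convergent) sequence. The remark following Theorem~\ref{thm-poly}---that no bound on the coefficients of the resulting hyperplanes is available and that coefficients may grow large before converging---also contradicts your uniform-denominator premise.

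Two secondary problems. First, you assume all generators lie in a bounded region because each iterate is contained in $\psi_{ij}^{(0)}$, but polyhedra may be unbounded, so the generators include rays; the paper disposes of rays by a separate argument (the ray sets are drawn from a fixed finite family and, by strict descent, can change only finitely often), which your proposal omits entirely. Second, even granting boundedness, your $V^{\infty}$ as defined is not confined to the bounded region, since an intersection of spanned hyperplanes can fall outside every iterate. To repair the proof you need the paper's key move: do not close under intersections of hyperplanes spanned by generated points, but argue that (since $\join$ adds no generators and $\meet$ is the only source of new extreme points) every extreme point arising during the iteration already belongs to the finite set $\bigcup_{X' \subseteq X} \{ P \mid \langle P, R \rangle = \bigcap_{x \in X'} x \}$ determined before the iteration begins, and then conclude by strict descent.
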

\begin{proof}
We can prove termination for convex polyhedra in a similar fashion 
as for octagons.
Consider an initial set of abstract states $X = [x_1, \ldots, x_n]$, 
and a system of inequalities of the form 
$x_i \bleq (x_i \meet x_j) \join (x_i \meet x_k)$.
By the polyhedron decomposition theorem (see, e.g.~\cite{Schrijver}), 
  any polyhedron may be generated
  by a finite set of points $P$, and a set of rays $R$.
  Let $R_V$ denote the set of rays of unit length in the direction of variable $v$ (that is,
  a vector with $v^{th}$ component $1$ or $-1$, and all other components $0$), for each $v \in V$.
  Given polyhedra $x = \langle P_x, R_x \rangle$ and $y = \langle P_y, R_y \rangle$, the results
  of the operations of interest have the following properties:
    $$
      \begin{array}{rcl}
        x \meet y = \langle P', R' \rangle & ~\textbf{where}~& R' \subseteq R_x \cup R_y \\
        x \join y = \langle P', R' \rangle & ~\textbf{where}~& P' \subseteq P_x \cup P_y,~R' \subseteq R_x \cup R_y \\
        \exists~V~.~x = \langle P', R' \rangle & ~\textbf{where}~ & P' \subseteq P_x,~R' \subseteq R_x \cup R_V
      \end{array}$$
%%\pjs{Surely in the last case the points are projected versions of the $P_x$?}
%%\gkg{The projection is performed by adding rays from $R_V$.}

\noindent
None of these operations can introduce rays not in $\mathcal{R} = R_V \cup \bigcup_{x \in X} R_x$. Additional
  extreme points can only be introduced during the application of $\meet$.
  During each fixed point iteration, exactly one of three cases must occur:
  \begin{enumerate}
    \item All abstract values in $X$ remain the same.
    \item The set of rays for some $x \in X$ changes.
    \item All the sets of rays remain the same, and the set of extreme points for some $x \in X$ changes.
  \end{enumerate}
  In case $1$, we terminate. As the set of rays is restricted to $\mathcal{R}$, and each iteration is
  strictly descending, case $2$ can only occur finitely many times.
  Assuming the set of rays remains fixed, every introduced extreme point
  must be some element of
\longmath{
    $$\bigcup_{X' \subseteq X} \{ P ~|~ \langle P, R \rangle = \bigcap_{x
      \in X'} x \}$$
}
\shortmath{
$\bigcup_{X' \subseteq X} \{ P ~|~ \langle P, R \rangle = \bigcap_{x
      \in X'} x \}$.
}
%  \gkg{Verify that this is, indeed, the case.}
  As this set is finite, and each step is descending, this can only occur finitely many times without
  case $2$ occurring.
  As case $2$ must always occur after a bounded number of steps, and can only occur finitely many times,
  the fixed point process must eventually terminate.
\qed
\end{proof}

%%Notice that, using this reasoning, we can prove that the octagon version of the domain is guaranteed to
%%terminate using the original construction $x = x \meet (y \join z)$.

We conclude that this process will terminate on the most commonly 
used relational numeric domains.
In the case of polyhedra, it is worth noting that Theorem~\ref{thm-poly}
does not provide any bounds on the \emph{coefficients} of 
hyperplanes in the resulting polyhedra. 
In some cases, the coefficients may grow quite large before converging,
which can cause problems for domains implemented with 
fixed-precision machine arithmetic.

\section*{Appendix B: Relaxation for \texttt{polyhedra}}
\label{appendix-polyhedra}

The relaxation algorithm for \texttt{polyhedra} follows the same 
intuition as that for \texttt{octagons};
we wish to collect the transitive closure of $\psi_{ij}$, 
then discard anything
implied separately by $\varphi \meet \sem{i < j}$.

However, the \texttt{polyhedra} domain provides two difficulties: 
computing the transitive closure
of a set of linear constraints is non-trivial (as the domain is typically stored as a minimal set of
generators and hyperplanes~\cite{Cousot_Halbwachs_POPL78});
and often is a bad idea in general: as the constraints do not
have bounded arity, there may be exponentially more constraints in 
the transitive closure than in the original problem.
Instead, we collect a set of constraints that is sufficient to 
reconstruct the constraints that, had we computed the transitive 
closure, would have been kept.

Given constraints $c_1 = \sem{k_1^\intercal x \geq m_1}$ and $c_2 = \sem{k_2^\intercal x \geq m2}$,
we can construct a new constraint (that is not implied separately by either $c_1$ or $c_2$) by
resolution if there is some matched pair of coefficients
$k_{1v}$, $k_{2v}$, such that $k_{1v} > 0$ and $k_{2v} < 0$.
We then construct a new constraint
$c_1 = \sem{(k_1 + \frac{k_{1v}}{k_{2v}}k_2)^\intercal x \geq m_1 + \frac{k_{1v}}{k_{2v}}m_2}$,
which has $v^{th}$ coefficient $0$.

\begin{example}\label{ex-resolve}
  Consider the constraints $c_1 = \sem{x + y \geq 7}$, $c_2 = \sem{z - 2y \geq 2}$, and
  $c_3 = \sem{w + 2y \geq 3}$.
  $c_1$ and $c_2$ may be resolved, as $c_1$ contains the term $y$, and $c_2$ contains $-2y$.
  This yields $c_{12} = \sem{x + \frac{z}{2} \geq 8}$.
  We cannot, however, construct any new constraints by combining $c_1$ with $c_3$.
\hfill \ebox
\end{example}

Rather than computing the transitive closure explicitly, given an initial set of \emph{interesting}
constraints $\ddot{C}$ and other constraints $C$, we find all those constraints that are \emph{resolvable}
with those in $\ddot{C}$ (taking into account the direction of previous
resolution steps), and add them to $\ddot{C}$. 
We then continue this process until no further resolvable constraints are found:
%%  $$\mathsf{trans}(\ddot{C}, C) =
%%    \left\{
%%      \begin{array}{cl}
%%        \mathsf{trans}(\ddot{C} \cup c', C \setminus \{c'\})
%%          & ~\textbf{if}~ \exists~c \in \ddot{C},~c' \in C~\textbf{s.t.}~\mathsf{resolvable}(c, c') \\
%%        \ddot{C} & ~\mathsf{otherwise}
%%      \end{array}
%%    \right.$$
%%\gkg{Should also pay attention to directionality. Since, starting from $c_1 = x - y \geq k_1$, we can
%%  unify with $y - z \geq k_2$; but afterwards, we still do not need to resolve with $w - y \geq k_3$,
%%  because there's no \emph{chain} of resolutions from $c_1$ to $c_3$.}
  $$\mathsf{trans^{\star}}(R, \ddot{C}, C) = \left\{ \begin{array}{l}
      \mathsf{trans^{\star}}(R \cup \{(\mathsf{sign}(k_{v'}), v')\}, \ddot{C} \cup \{c\}, C) \\
      \hphantom{\ddot{C}}~~\textbf{if}~\exists~(s,v) \in R,~c = \sem{k^\intercal x \geq m} \in C,~v' \in x \\
      \hphantom{\ddot{C}}~~\textbf{s.t.}~v \neq v',~k_v s < 0,
      ~((\mathsf{sign}(k_{v'}), v') \notin R \vee c \notin \ddot{C} \\
      \ddot{C} ~\textbf{otherwise}
    \end{array} \right. $$
  $$\mathsf{trans}(\ddot{C}, C) = 
  \mathsf{trans^{\star}}(\{ \mathsf{sign}(k_v, v) ~|~ \sem{k^\intercal x \geq m} \in \ddot{C},~v \in x \}, \ddot{C}, C)$$
Given a set of linear constraints $C$ defining a polyhedron, 
we can construct the initial set $\ddot{C}$ with the elements of $C$ that are not
implied by $\varphi \meet \sem{i < j}$, then compute the relaxation with
$\mathsf{trans}(\ddot{C}, C~\setminus~\ddot{C})$.
As with \texttt{octagons}, we can avoid performing implication tests by instead initializing
$\ddot{C}$ with those constraints containing variables in $V_{\mathcal{A}}$.

\begin{example}\label{ex-polyextract}
  Consider a constraint of interest $c_0 = \sem{a - y \geq 0}$, with additional constraints
  $C = \{c_1 = \sem{y - z \geq 0}, c_2 = \sem{z + w \geq 0}, c_3 = \sem{x - y \geq 0}\}$.
  Initially, we have $\ddot{C} = \{\sem{a - y \geq 0}\}$, and $R = \{(+, a), (-, y)\}$.
  We can resolve with $\sem{y - z \geq 0}$, since $(-, y) \in R$, but $(-, z) \notin R$.
  During the second step, we include $c_2$, adding $(-, w)$ to $R$.
  At this point, we have $R = \{(+, a), (-, y), (-, z), (+, w)\}$, and
  $\ddot{C} = \{c_0, c_1, c_2\}$, 
and we cannot add anything to either $R$ or $\ddot{C}$;
  so we return the current value of $\ddot{C}$.
\hfill \ebox
\end{example}

\section*{Appendix C: Partitions for \texttt{init\_rand}$_m$}
\label{appendix-randpart}
The large number of partitions required for the \texttt{init\_rand} family
is not necessarily obvious.
Assuming $n$ is non-negative, we must distinguish between the case where the array is empty
(thus $n = 0$) or non-empty ($n > 0$).
This gives us two base orderings: $[\{0, n\}]$, and $[\{0\},\{n\}]$.
In these descriptions, sets denote equivalence classes, and equivalence classes are
listed in increasing order.

If $n = 0$, there is only one possible value for $i_1$: $[\{0, i, n\}, \{i^+\}]$.
Otherwise, we must distinguish all the possible relations between $0$, $i$ and $n$.
The resulting orderings are as follows: 
\begin{center}
\begin{tabular}{c|c}
  $[\{0, i_1, n\}, \{i_1^+\}]$ & $[\{0\}, \{i_1, n\} \{i_1^+\}]$ \\ 
  \hline
  $[\{0, i_1\}, \{i_1^+, n\}]$ & $[\{0\}, \{i_1\}, \{i_1^+, n\}]$ \\
  \hline
  $[\{0, i_1\}, \{i_1^+\}, \{n\}]$ & $[\{0\}, \{i_1\}, \{i_1^+\}, \{n\}]$ \\
\end{tabular}
\end{center}
When we construct the partitions for $m = 2$, we introduce $i_2$ into all
feasible locations in each of the partitions for $m = 1$.

In cases where we must explicitly distinguish the $0^{th}$ element, this
progression grows substantially faster, with $m = 1, \ldots, 5$ yielding
$[9, 45, 333, 3285, 40509]$.
\end{document}